\documentclass[aps,prx,twocolumn,superscriptaddress]{revtex4-2}

\usepackage[T1]{fontenc}
\usepackage[utf8]{inputenc}
\usepackage[english]{babel}
\usepackage{amsmath,amssymb,amsfonts,amsthm,bm,dsfont}
\usepackage{graphicx}
\usepackage[export]{adjustbox}
\usepackage{xcolor}
\usepackage{braket}
\usepackage{physics}
\usepackage{array,multirow,dcolumn}
\usepackage{algorithm}
\usepackage{algpseudocode}
\usepackage{tikz}
\usepackage[colorlinks,citecolor=blue,linkcolor=blue,urlcolor=blue]{hyperref}

\theoremstyle{plain}
\newtheorem{theorem}{Theorem}
\newtheorem{proposition}[theorem]{Proposition}

\theoremstyle{definition}

\theoremstyle{remark}

\begin{document}

\title{Fibonacci many-body scars in a decorated Rule-54 quantum cellular automaton}

\author{Han-Ze Li}
\affiliation{Institute for Quantum Science and Technology, Shanghai University, Shanghai 200444, China}
\affiliation{Department of Physics, National University of Singapore, Singapore 117542, Singapore}

\author{Jian-Xin Zhong}
\email{jxzhong@shu.edu.cn}
\affiliation{Institute for Quantum Science and Technology, Shanghai University, Shanghai 200444, China}

% \author{Ching Hua Lee}
% \email{phylch@nus.edu.sg}
% \affiliation{Department of Physics, National University of Singapore, Singapore 117542, Singapore}

\begin{abstract}
Quantum many-body scars provide a controlled form of weak ergodicity breaking, in which structured nonthermal eigenstates coexist with a thermalizing many-body spectrum. We introduce a qubit-level route to exact scars based on the intrinsic soliton structure of the Rule-54 quantum cellular automaton. A hard-core dimer sector of Rule 54 supplies an exactly translatable protected skeleton, while local projector-controlled decorations are invisible on this skeleton and nontrivial outside it. The protected dynamics is therefore reducible to finite translation orbits, whose Fourier modes form exact Floquet eigenstates with sub-volume-law entanglement. The number of exact scars grows with Fibonacci combinatorics, whereas their fraction in the full qubit Hilbert space remains exponentially small. Finite-size simulations show Page-like eigenstate entanglement, rapid entanglement growth, fidelity decay, and circular unitary ensemble quasienergy statistics in the decorated complement. This construction demonstrates that exact many-body scars can be engineered from native finite-orbit structures of an interacting reversible automaton, and provides a direct starting point for digital quantum simulation of scarred cellular-automaton dynamics.
\end{abstract}

\maketitle

\section{Introduction}
\label{sec:introduction}
Thermalization in isolated quantum many-body systems is commonly formulated through the eigenstate thermalization hypothesis~\cite{Deutsch1991,Srednicki1994,Rigol2008,DAlessio2016}: typical eigenstates of a chaotic system are locally thermal, highly entangled, and statistically featureless. The same chaotic regime is diagnosed spectrally by random-matrix theory~\cite{BohigasGiannoniSchmit1984,MehtaBook,HaakeBook,Atas2013}. Quantum many-body scars (QMBS)~\cite{Bernien2017,Turner2018NatPhys,Turner2018PRB,HoChoiPichlerLukin2019,ShiraishiMori2017,SerbynAbanin2021,MoudgalyaReview2021,Chandran2023} violate this expectation in a selective way: a sparse set of structured nonthermal eigenstates produces revivals and slow relaxation from special initial states, without making the full spectrum nonthermal. This weak form of ergodicity breaking is distinct from integrable dynamics~\cite{Bethe1931,CauxEssler2013}, many-body localization~\cite{Anderson1958,Basko2006,NandkishoreHuse2015,Abanin2019}, and Hilbert-space fragmentation~\cite{Sala2020,Khemani2020Fragmentation,MoudgalyaReview2021}, where nonthermal structure is organized by extensive conservation laws, localization, or disconnected Krylov sectors. The inverse problem addressed here is therefore precise: can one engineer a local Floquet dynamics in which an exactly solvable nonthermal skeleton is protected, while the rest of the Hilbert space remains dynamically generic?
On the other hand, further developments of the QMBS program include emergent algebraic mechanisms, quasiparticle and mixed-phase-space interpretations, exact scar constructions, and quantum-computing or data-driven detection protocols~\cite{Choi2019,Khemani2019,LinMotrunich2020,Michailidis2020,Yao2021ScarsCriticality,BurkeDooley2025Temperature,FengZhang2024QCNNScars,Gustafson2023Preparing}.

Quantum cellular automata (QCA)~\cite{Watrous1995,SchumacherWerner2004,GrossNesmeVogtsWerner2012,Arrighi2019,Farrelly2020} are a natural language for this problem. A QCA is a reversible, strictly local, discrete-time update rule, and hence already has the microscopic form of a digital quantum circuit~\cite{Lloyd1996,Raussendorf2005,Preskill2018,Qiskit2021,SmithKimPollmannKnolle2019,Arute2019}. Compared with random unitary circuits~\cite{NahumRuhmanVijayHaah2017,vonKeyserlingk2018,JonayHuseNahum2018,ZhouNahum2019}, QCA can contain deterministic structures such as finite orbits, solitons, local constraints, and exact tensor-network identities. The Rule-54 QCA~\cite{Bobenko1993,ProsenMejiaMonasterio2016,ProsenBuca2017,BucaGarrahanProsenVanicat2019,Buca2021Review,KlobasBertiniPiroli2021,KlobasBertini2021,LopezPiqueres2022,PalettaProsen2026} is especially suited to this aim. It is a qubit QCA generated by a reversible three-site update, supports soliton-like excitations with interaction-induced scattering shifts, and provides an exactly solvable benchmark for interacting relaxation and entanglement growth. It is therefore neither a featureless random circuit nor a trivial permutation model: it has enough deterministic structure to support protected finite-orbit dynamics, while still representing an interacting many-body automaton.

Our construction uses this Rule-54 soliton structure as the scar skeleton. Projector embedding~\cite{ShiraishiMori2017,Moudgalya2018,SchecterIadecola2019,MarkLinMotrunich2020,Pakrouski2020} gives a general method for making selected states invisible to parts of a local dynamics. In dual-unitary circuits~\cite{BertiniKosProsen2018,BertiniKosProsen2019PRX,BertiniKosProsen2019PRL,PiroliBertiniCirac2020,Bertini2024SpaceLikeDynamics,Logaric2024DUScars}, related ideas can embed scarred subspaces whose protected motion is essentially SWAP dynamics. Rule 54 offers a different microscopic route. Its elementary gate is a three-site cellular-automaton update, and the natural protected motion is not SWAP motion but native soliton translation. We identify a hard-core dimer sector in which one Rule-54 Floquet period acts as a rigid translation on a coarse-grained lattice. We then insert local projector-decorated gates after the Rule-54 half steps. These decorations are identity operators on the local patterns visited by the protected trajectory and nontrivial on forbidden local patterns, so the protected soliton motion survives exactly while the complement is locally decorated.

This produces a qubit-only family of exact Floquet scars. The protected configurations are hard-core dimer configurations on a cycle, so their number grows with Fibonacci combinatorics, whereas their fraction in the full qubit Hilbert space is exponentially small. The protected translation decomposes this constrained space into finite orbits, and Fourier modes on these orbits give exact eigenstates of the full decorated Floquet unitary. The resulting orbit-Fourier scars have sub-volume-law entanglement, bounded by the orbit length, in contrast with the Page entanglement expected for generic Floquet eigenstates~\cite{Page1993,CalabreseCardy2005,AlbaCalabrese2017,NahumRuhmanVijayHaah2017,JonayHuseNahum2018}. Numerically, site-dependent non-diagonal decorations produce high-entanglement bulk eigenstates, rapid entanglement growth, fidelity decay, and quasienergy statistics consistent with the circular uniary ensemble (CUE) symmetry class after removing the protected sector. Thus the construction converts a native finite-orbit structure of an interacting reversible automaton into an exact scar sector embedded in a thermal-like Floquet background.

The circuit formulation also makes the model relevant for digital quantum simulation~\cite{Lloyd1996,Raussendorf2005,Preskill2018,Qiskit2021,SmithKimPollmannKnolle2019,Arute2019,KohTaiLee2022PRL,KohTaiLee2022npj,KohTaiLee2024NatComm,ShenChenYangLee2025NatComm,KohXueTaiKohLee2025,ShenLee2026,ShenChenLee2025,ShenHaoLee2025,NgWangChanShenChenLee2026}. The local Rule-54 update can be decomposed into elementary reversible gates, and hardware-friendly diagonal decorations can be implemented as pattern-controlled phase gates on forbidden local configurations. Beyond the present construction, decorated Rule-54 circuits provide a controlled setting for monitored entanglement dynamics~\cite{LiChenFisher2018,SkinnerRuhmanNahum2019,ChanNandkishorePretkoSmith2019,BaoChoiAltman2020,IppolitiKhemani2021}, quantum Mpemba effects~\cite{AresMurcianoCalabrese2023,LiuZhangYinZhang2024,LiuZhangYinZhangYao2023,LiuZhangYinZhangYao2025,YuLiuShiXinZhang2025,XuEtAl2025Mpemba,YuLiZhang2025CPL,LiLeeLiuShiXinZhangZhong2025,ZhangLiHuangZhaoZhong2026}, and nonstabilizerness or magic spreading~\cite{BravyiKitaev2005,Veitch2014,HowardCampbell2017,SeddonCampbell2019,LeoneOlivieroHamma2022,HaugPiroli2023,Iannotti2025,ViscardiDalmonteHammaTirrito2025,Collura2026,Korbany2025Magic,QianWang2025,HuangLiLeeZhong2025}. Sec.~\ref{sec:rule54_qca} defines the Rule-54 QCA and the hard-core soliton sector. Sec.~\ref{sec:projector_decorated_rule54} introduces the projector-decorated circuit. Sec.~\ref{sec:fibonacci_scar_sector} constructs the orbit-Fourier scar eigenstates and derives their counting and entanglement properties. Sec.~\ref{sec:numerics} presents numerical diagnostics, Sec.~\ref{sec:digital_realization} discusses digital realization, and Sec.~\ref{sec:conclusion} concludes.

\section{Model}
\subsection{Rule-54 quantum cellular automaton}
\label{sec:rule54_qca}

We consider a periodic chain of $N\!=\!2L$ qubits. The local Hilbert space at site $j$ is $\mathcal H_j\!\simeq\!\mathbb C^2$, with computational basis $|s_j\rangle_j$ and $s_j\!\in\!\{0,1\}$. The full Hilbert space is $\mathcal H_N\!=\!\bigotimes_{j=0}^{N-1}\mathcal H_j$ and all site labels are understood modulo $N$.
The local Rule-54 update is a three-site reversible gate $R_j$ acting on the triplet $(j-1,\,j,\,j+1)$. It copies the two neighboring qubits and updates only the central qubit:
\begin{align}
R_j|x,y,z\rangle_{j-1,j,j+1}
=
|x,\chi(x,y,z),z\rangle_{j-1,j,j+1},
\label{eq:Rj_action_main}
\end{align}
where
\begin{align}
\chi(x,y,z)=(x+y+z+xz)\bmod2.
\label{eq:chi_rule54_main}
\end{align}
For reference, the central update has the truth table
\begin{align}
\begin{array}{ccccccccc}
x & 0 & 0 & 0 & 0 & 1 & 1 & 1 & 1 \\
y & 0 & 0 & 1 & 1 & 0 & 0 & 1 & 1 \\
z & 0 & 1 & 0 & 1 & 0 & 1 & 0 & 1 \\
\hline
\chi(x,y,z) & 0 & 1 & 1 & 0 & 1 & 1 & 0 & 0
\end{array}
\label{eq:rule54_table_compact}
\end{align}
The corresponding local matrix elements are
\begin{align}
\langle x',y',z'|R_j|x,y,z\rangle
=
\delta_{x',x}\delta_{z',z}\delta_{y',\chi(x,y,z)}.
\label{eq:Rj_matrix_main}
\end{align}
Since $\chi(x,\chi(x,y,z),z)\!=\!y$, each local update is an involution, $R_j^2\!=\!\mathbb I$, and hence a permutation unitary. This deterministic structure is important: finite orbits of the computational-basis permutation generated by Rule 54 can be diagonalized exactly by discrete Fourier transform.

The Floquet update is built from even- and odd-center layers. In this paper we use the convention
\begin{align}
U_e=\prod_{r=0}^{L-1}R_{2r},
\quad
U_o=\prod_{r=0}^{L-1}R_{2r+1},
\quad
U_{54} \equiv U_oU_e.
\label{eq:Ue_Uo_main}
\end{align}
The opposite convention only reverses the direction of the coarse-grained translation below. Within each layer the updates commute: gates with the same center parity update distinct central sites, while shared neighboring sites are copied through as controls. These are also illustrated in Fig.~\ref{fig:0}(a).

We now identify the hard-core soliton sector. Introduce a coarse-grained lattice of length $L$, whose $r$-th site corresponds to the physical even bond $(2r,\,2r+1)$. A coarse-grained configuration is $\mathbf a\!=\!(a_0,\ldots,a_{L-1})$ with $a_r\!\in\!\{0,1\}$. We impose the hard-core constraint
\begin{align}
\mathcal I_L=
\left\{
\mathbf a\in\{0,1\}^L:
 a_ra_{r+1}=0\ \text{for all }r
\right\},
\label{eq:IL_main}
\end{align}
where $r$ is understood modulo $L$. The corresponding even-dimer and odd-dimer product states are
\begin{align}
\begin{split}
|\mathbf a\rangle_e
&=
\bigotimes_{r=0}^{L-1}|a_r\rangle_{2r}|a_r\rangle_{2r+1},
\\
|\mathbf a\rangle_o
&=
\bigotimes_{r=0}^{L-1}|a_r\rangle_{2r+1}|a_r\rangle_{2r+2}.
\label{eq:even_odd_dimer_main}
\end{split}
\end{align}
Thus $a_r\!=\!1$ represents a physical $11$ dimer on the even bond $(2r,\,2r+1)$, and $a_r\!=\!0$ represents a $00$ dimer. The hard-core condition forbids neighboring occupied dimers on the coarse-grained cycle.

Then, let $T$ be the coarse-grained translation $(T\mathbf a)_r\!=\!a_{r-1}$. The key Rule-54 identity is
\begin{align}
U_e|\mathbf a\rangle_e=|\mathbf a\rangle_o,
\quad
U_o|\mathbf a\rangle_o=|T\mathbf a\rangle_e,
\quad
U_{54}|\mathbf a\rangle_e=|T\mathbf a\rangle_e.
\label{eq:rule54_translation_main}
\end{align}
In $|\mathbf a\rangle_e$, the three qubits around the even center $2r$ are $(a_{r-1},\,a_r,\,a_r)$. The updated qubit is $\chi(a_{r-1},\,a_r,\,a_r)\!=\!(a_{r-1}+a_{r-1}a_r)\bmod2\!=\!a_{r-1}$, where the last step uses $a_{r-1}a_r=0$. Since the odd site $2r+1$ is unchanged by $U_e$, the output has qubits $(s'_{2r},\,s'_{2r+1})\!=\!(a_{r-1},\,a_r)$, which is exactly $|\mathbf a\rangle_o$. The second identity follows by the same calculation on the odd centers. A fully expanded proof, including the periodic-boundary convention, is given in Appendix~\ref{app:rule54_translation}.

We define the even hard-core soliton subspace
\begin{align}
\mathcal T_e={\rm span}\{|\mathbf a\rangle_e:\mathbf a\in\mathcal I_L\}.
\label{eq:Te_main}
\end{align}
Eq.~\eqref{eq:rule54_translation_main} implies that $\mathcal T_e$ is invariant under $U_{54}$ and that the restricted dynamics is exactly translation:
\begin{align}
U_{54}|_{\mathcal T_e}=T|_{\mathcal T_e}.
\label{eq:U54_equals_T_main}
\end{align}
For example, the one-soliton state $|1100\cdots0\rangle$ evolves as $|1100\cdots0\rangle\!\to\!|001100\cdots0\rangle\!\to\!|00001100\cdots0\rangle\!\to\!\cdots$, with period $L$ on a periodic chain. This rigid soliton translation is the protected dynamics that will survive the decoration.

\subsection{Projector-decorated Rule-54 circuit}
\label{sec:projector_decorated_rule54}

\begin{figure}[bt]
\centering
\includegraphics[width=\columnwidth]{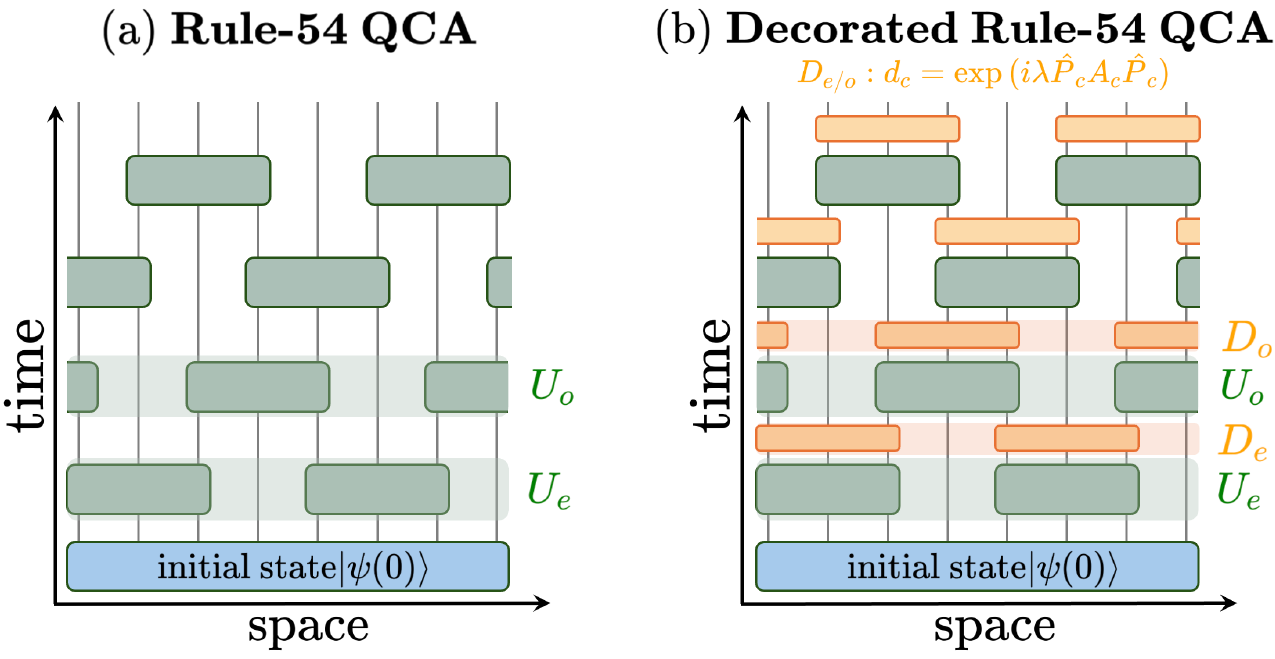} 
\caption{
\emph{Circuit architecture of the bare and decorated Rule-54 quantum cellular automaton.}
(a) Bare Rule-54 QCA. The evolution is generated by alternating layers of three-site reversible Rule-54 gates. 
The green blocks denote local updates $R_j$ arranged in the even and odd layers $U_e$ and $U_o$, and one Floquet period is $U_{54}\!=\!U_oU_e$. 
The vertical direction represents discrete Floquet time, while the horizontal direction labels physical space.
(b) Decorated Rule-54 QCA. The same Rule-54 backbone is supplemented by local projector-protected decoration layers. 
After the even Rule-54 layer $U_e$, a decoration layer $D_e$ is inserted; after the odd Rule-54 layer $U_o$, a decoration layer $D_o$ is inserted. 
The resulting Floquet unitary is therefore $U_{\rm DR54}\!=\!D_oU_oD_eU_e$. 
The decorations are constructed to act trivially on the protected hard-core soliton trajectory, while generating nontrivial local dynamics outside the protected sector.
}
\label{fig:0}
\end{figure}

We now define local decorations that are invisible on the hard-core soliton trajectory but generic outside it. The bare trajectory alternates as $\mathcal T_e\xrightarrow{U_e}\mathcal T_o\xrightarrow{U_o}\mathcal T_e$, where $\mathcal T_o\!=\!{\rm span}\{|\mathbf a\rangle_o:\mathbf a\!\in\!\mathcal I_L\}$. Hence a decoration inserted after $U_e$ must act trivially on $\mathcal T_o$, and a decoration inserted after $U_o$ must act trivially on $\mathcal T_e$.

For a center site $c$, write a local three-qubit pattern as $\tau\!=\!(\tau_L,\tau_C,\tau_R)\!\in\!\{0,1\}^3$ on sites $(c-1,\,c,\,c+1)$. The allowed patterns compatible with the even-dimer hard-core sector are
\begin{align}
X_{2r}^{E}=\{000,100,011\},
\quad
X_{2r+1}^{E}=\{000,001,110\}.
\label{eq:X_E_main}
\end{align}
The allowed patterns compatible with the odd-dimer hard-core sector are shifted by one site:
\begin{align}
X_{2r}^{O}=\{000,001,110\},
\quad
X_{2r+1}^{O}=\{000,100,011\}.
\label{eq:X_O_main}
\end{align}
For $\mu\!=\!E,O$, define the local forbidden-pattern projector
\begin{align}
P_c^\mu=\mathbb I_{c-1,c,c+1}-\sum_{\tau\in X_c^\mu}|\tau\rangle\langle\tau|_{c-1,c,c+1},
\label{eq:local_projector_main}
\end{align}
and let $\hat P_c^\mu$ denote its embedding into the full chain. The corresponding global projectors are
\begin{align}
\Pi_e=\prod_{c=0}^{N-1}(\mathbb I-\hat P_c^E),
\quad
\Pi_o=\prod_{c=0}^{N-1}(\mathbb I-\hat P_c^O).
\label{eq:global_projectors_main}
\end{align}
Their images are $\mathcal T_e$ and $\mathcal T_o$, respectively. The proof is local: the allowed triplets force each even or odd dimer to be either $00$ or $11$, and the missing triplets enforce the hard-core constraint between neighboring dimers. The details are given in Appendix~\ref{app:projector_images}.

Then, let $A_c^{(e)}$ and $A_c^{(o)}$ be arbitrary three-qubit Hermitian operators, possibly chosen independently from site to site. We define projected Hermitian generators
\begin{align}
G_c^{(e)}=\hat P_c^O A_c^{(e)}\hat P_c^O,
\quad
G_c^{(o)}=\hat P_c^E A_c^{(o)}\hat P_c^E.
\label{eq:G_generators_main}
\end{align}
The superscript $(e)$ means that the gate is inserted after the even Rule-54 layer; it uses $P^O$ because the state is then in $\mathcal T_o$. Similarly, the superscript $(o)$ means that the gate is inserted after the odd Rule-54 layer; it uses $P^E$ because the state is then in $\mathcal T_e$.

The local decoration gates and the full decoration layers are
\begin{align}
d_c^{(e)}=\exp(i\lambda_eG_c^{(e)})&,
\quad
d_c^{(o)}=\exp(i\lambda_oG_c^{(o)}),
\\
D_e=\overleftarrow{\prod_{c=0}^{N-1}}d_c^{(e)}&,
\quad
D_o=\overleftarrow{\prod_{c=0}^{N-1}}d_c^{(o)}.
\label{eq:decorations_main}
\end{align}
The ordering in the products is fixed but arbitrary. For a strictly parallel circuit implementation, the same product can be decomposed into a constant number of sublayers by coloring the three-site supports; this locality point is discussed in Appendix~\ref{app:decorated_proof}. The decorated Floquet unitary [see also Fig.~\ref{fig:0}(b)] is
\begin{align}
U_{\rm DR54}=D_oU_oD_eU_e.
\label{eq:UDR54_main}
\end{align}

The reason for the projected form is immediate. If $|\psi_o\rangle\!\in\!\mathcal T_o$, then $\hat P_c^O|\psi_o\rangle\!=\!0$ for all $c$, hence $G_c^{(e)}|\psi_o\rangle\!=\!0$ and $d_c^{(e)}|\psi_o\rangle\!=\!|\psi_o\rangle$. Thus $D_e$ is the identity on $\mathcal T_o$. Similarly, $D_o$ is the identity on $\mathcal T_e$:
\begin{align}
D_e|_{\mathcal T_o}=\mathbb I_{\mathcal T_o},
\quad
D_o|_{\mathcal T_e}=\mathbb I_{\mathcal T_e}.
\label{eq:D_identity_main}
\end{align}
Combining this with Eq.~\eqref{eq:rule54_translation_main}, we obtain the exact protected dynamics
\begin{align}
\begin{split}
U_{\rm DR54}|\mathbf a\rangle_e
&=
D_oU_oD_eU_e|\mathbf a\rangle_e \\
&=
|T\mathbf a\rangle_e,
\label{eq:UDR54_translation_main}
\end{split}
\end{align}
where $\mathbf a\!\in\!\mathcal I_L$.
Equivalently,
\begin{align}
U_{\rm DR54}|_{\mathcal T_e}&=T|_{\mathcal T_e},
\nonumber\\
(\mathbb I-
\Pi_e)&U_{\rm DR54}\Pi_e=0.
\label{eq:no_leakage_main}
\end{align}
The full proof, including the operator identities for the projectors, is given in Appendix~\ref{app:decorated_proof}. And Fig.~\ref{fig:0} summarizes the circuit structure used throughout this work.

\section{Analysis results}
\label{sec:fibonacci_scar_sector}
The previous section established the central structural identity of the decorated circuit: on the even hard-core soliton subspace $\mathcal T_e$, the full Floquet unitary $U_{\rm DR54}$ acts exactly as the coarse-grained translation $T$. We now diagonalize this protected motion. The result is a complete, analytically controlled family of Floquet eigenstates of the full decorated circuit. These states are the scars of the model. They are exponentially numerous, but still occupy an exponentially small fraction of the full qubit Hilbert space; moreover, their entanglement is bounded by a logarithm of system size.

The construction is conceptually simple. The protected basis states $|\mathbf a\rangle_e$ are labelled by hard-core configurations $\mathbf a\!\in\!\mathcal I_L$. Since $U_{\rm DR54}$ translates these labels, the problem of finding eigenstates in $\mathcal T_e$ reduces to diagonalizing a finite permutation. The eigenstates of a finite permutation are Fourier modes on its cycles. In the present setting the cycles are translation orbits of hard-core soliton configurations.

\subsection{Size of the protected sector}
We first count the dimension of $\mathcal T_e$. Since the states $|\mathbf a\rangle_e$ with $\mathbf a\!\in\!\mathcal I_L$ are mutually orthonormal, $\dim\mathcal T_e\!=\!|\mathcal I_L|$. The set $\mathcal I_L$ is the set of independent sets on the cycle graph $C_L$. Equivalently, it is the set of binary strings on a ring with no adjacent $1$'s. This elementary combinatorial fact is already enough to distinguish the protected sector from a conventional local-alphabet subspace: its growth is exponential, but with a base smaller than the full qubit Hilbert-space growth.
\begin{proposition}
\label{thm:fibonacci_count_main}
For a periodic chain of $N$ qubits, the even hard-core soliton sector has dimension $\dim\mathcal T_e\!=\!|\mathcal I_L|\!=\!F_{L-1}+F_{L+1}\!=\!
\varphi^L+(-\varphi^{-1})^L$,
where $F_n$ denotes the Fibonacci number and $\varphi\!=\!\frac{(1+\sqrt 5)}{2}$. Consequently, $\dim\mathcal T_e\!\sim\!\varphi^L$,
\begin{align}
\frac{\dim\mathcal T_e}{2^N}
\!=\!\frac{F_{L-1}+F_{L+1}}{4^L}
\sim
\left(\frac{\sqrt\varphi}{2}\right)^N.
\label{eq:fibonacci_fraction_theorem_main}
\end{align}
\end{proposition}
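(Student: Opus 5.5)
The plan is a transfer-matrix count of independent sets on the cycle $C_L$, followed by the Binet formula. Since the states $|\mathbf a\rangle_e$ are orthonormal, $\dim\mathcal T_e=|\mathcal I_L|$, and it suffices to count binary strings on a ring of length $L$ with no two cyclically adjacent $1$'s. We introduce the $2\times 2$ transfer matrix
\begin{align}
M=\begin{pmatrix}1&1\\1&0\end{pmatrix},
\qquad M_{a,b}=1-ab,\quad a,b\in\{0,1\}.
\end{align}
Each factor $M_{a_r,a_{r+1}}$ equals one exactly when the bond $(r,r+1)$ satisfies the hard-core constraint. Summing over configurations therefore gives
\begin{align}
|\mathcal I_L|=\sum_{\mathbf a}\prod_{r=0}^{L-1}M_{a_r,a_{r+1}}=\operatorname{Tr}M^L,
\end{align}
where the index $a_L\equiv a_0$ implements the periodic boundary condition.

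The eigenvalues of $M$ are the roots of $x^2-x-1=0$, namely $\varphi$ and $-\varphi^{-1}$. Hence
\begin{align}
\operatorname{Tr}M^L=\varphi^L+(-\varphi^{-1})^L,
\end{align}
which is the Lucas number $L_L$. To connect with the Fibonacci form, we use $M^n=\begin{pmatrix}F_{n+1}&F_n\\F_n&F_{n-1}\end{pmatrix}$, which follows by induction from $M^{n+1}=M^nM$ and the recursion $F_{n+1}=F_n+F_{n-1}$. Taking the trace gives $\operatorname{Tr}M^L=F_{L+1}+F_{L-1}$.

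As an independent check, we can condition on $a_0$. If $a_0=0$, the remaining sites form an open path of $L-1$ sites with no adjacent ones, contributing $F_{L+1}$ strings. If $a_0=1$, then $a_1=a_{L-1}=0$ and the free sites form a path of $L-3$ sites, contributing $F_{L-1}$ strings. This uses the standard fact that a path of $m$ sites has $F_{m+2}$ admissible strings.

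The asymptotics then follow directly. Since $|{-\varphi^{-1}}|<1$, we have $\dim\mathcal T_e=\varphi^L(1+O(\varphi^{-2L}))\sim\varphi^L$. With $N=2L$, this gives
\begin{align}
\frac{\dim\mathcal T_e}{2^N}=\frac{F_{L-1}+F_{L+1}}{4^L}\sim\frac{\varphi^L}{4^L}=\left(\frac{\sqrt\varphi}{2}\right)^{N}.
\end{align}

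The only delicate step is the boundary and small-$L$ bookkeeping. We need the Fibonacci index convention $F_0=0$, $F_1=1$. We must also check the degenerate cases $L=1,2$, where cyclic adjacency wraps around; for example, at $L=1$ the constraint $a_0a_0=0$ forces $a_0=0$. Where the trace formula and the physical periodic-boundary convention of Eq.~\eqref{eq:IL_main} might differ, we will state that the formula is intended for $L\ge 3$, and otherwise confirm it by direct enumeration.
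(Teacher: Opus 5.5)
Your proposal is correct and follows essentially the same route as the paper: the transfer matrix $M$ gives $|\mathcal I_L|={\rm Tr}\,M^L=\varphi^L+(-\varphi^{-1})^L$. The differences are cosmetic: you get $F_{L+1}+F_{L-1}$ from the closed form of $M^n$ rather than from Binet's formula, and your small-$L$ caveat is unnecessary, because with indices taken modulo $L$ the trace formula already gives the correct counts $1$ and $3$ for $L=1,2$.
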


\begin{proof}
The compatibility matrix for two neighboring hard-core occupations $a,b\!\in\!\{0,1\}$ is
\begin{align}
M_{ab}=1-ab,
\quad
M=
\begin{pmatrix}
1&1\\
1&0
\end{pmatrix}.
\label{eq:hardcore_transfer_main}
\end{align}
A periodic configuration contributes if every adjacent pair is compatible, including the boundary pair. Therefore
\begin{align}
|\mathcal I_L|
&=
\sum_{a_0,\ldots,a_{L-1}=0}^{1}
M_{a_0a_1}M_{a_1a_2}\cdots M_{a_{L-2}a_{L-1}}M_{a_{L-1}a_0}
\nonumber\\
&=
\sum_{a_0=0}^{1}(M^L)_{a_0a_0}
\nonumber\\
&=
{\rm Tr}\,M^L.
\label{eq:trace_count_main}
\end{align}
The eigenvalues of $M$ are $\varphi$ and $-\varphi^{-1}$, which gives $|\mathcal I_L|\!=\!\varphi^L+(-\varphi^{-1})^L$. The equality to $F_{L-1}+F_{L+1}$ follows from Binet's formula. The asymptotic fraction follows from $2^N\!=\!4^L$. A more detailed derivation, including the fixed-soliton-number count, is given in Appendix~\ref{app:fibonacci_counting}.
\end{proof}

The number $F_{L-1}+F_{L+1}$ is also the Fibonacci count of the cycle. We will nevertheless refer to the sector as Fibonacci-dimensional, because its asymptotic growth is governed by the Fibonacci golden ratio. The important physical point is the scaling in Eq.~\eqref{eq:fibonacci_fraction_theorem_main}: the scar sector is exponentially large, but its spectral weight in the full Hilbert space vanishes exponentially. This is the correct scaling for weak ergodicity breaking, rather than strong fragmentation of Hilbert space.

\subsection{Translation orbits and cyclic subspaces}
The translation $T$ partitions $\mathcal I_L$ into disjoint finite orbits. We write this decomposition as $\mathcal I_L\!=\!\bigsqcup_\nu\mathcal O_\nu$, where
\begin{align}
\mathcal O_\nu
=
\left\{
\mathbf a_\nu,
T\mathbf a_\nu,
\ldots,
T^{p_\nu-1}\mathbf a_\nu
\right\},
\quad
T^{p_\nu}\mathbf a_\nu=\mathbf a_\nu.
\label{eq:orbit_definition_main}
\end{align}
Here $p_\nu$ is the minimal period of the representative $\mathbf a_\nu$. Since $T^L\!=\!\mathbb I$ on the length-$L$ coarse-grained ring, every period satisfies $p_\nu|L$ and $p_\nu\!\le\!L$. Each orbit defines an invariant cyclic subspace
\begin{align}
\mathcal H_\nu
=
{\rm span}\left\{
|T^t\mathbf a_\nu\rangle_e:
 t=0,\ldots,p_\nu-1
\right\}
\subset\mathcal T_e.
\label{eq:orbit_subspace_main}
\end{align}
On this subspace, the decorated circuit is simply a cyclic shift,
\begin{align}
U_{\rm DR54}|T^t\mathbf a_\nu\rangle_e
=
|T^{t+1}\mathbf a_\nu\rangle_e,
\label{eq:cyclic_shift_main}
\end{align}
where $t\!=\!0,\ldots,p_\nu-1$, and $t+1$ is understood modulo $p_\nu$. Thus the problem has been reduced from diagonalizing the full $2^N$-dimensional Floquet unitary to diagonalizing independent finite cyclic shifts.

\subsection{Exact scar eigenstates}
The eigenvectors of a cyclic shift are discrete Fourier modes. For each orbit $\mathcal O_\nu$ and each integer $m$, define
\begin{align}
|\Phi_{\nu,m}\rangle
=
\frac{1}{\sqrt{p_\nu}}
\sum_{t=0}^{p_\nu-1}
\exp\left(-\frac{2\pi i mt}{p_\nu}\right)
|T^t\mathbf a_\nu\rangle_e.
\label{eq:orbit_fourier_state_main_new}
\end{align}
These are the exact scar eigenstates: they are exact eigenstates of the full decorated circuit, they are nonthermal, and their fraction among all Floquet eigenstates is exponentially small in system size.

\begin{proposition}
\label{thm:orbit_fourier_main}
For every translation orbit $\mathcal O_\nu$ and every $m\!=\!0,\ldots,p_\nu-1$, the state $|\Phi_{\nu,m}\rangle$ is an exact Floquet eigenstate of $U_{\rm DR54}$ with eigenphase $2\pi m/p_\nu$:
\begin{align}
U_{\rm DR54}|\Phi_{\nu,m}\rangle
=
\exp\left(\frac{2\pi i m}{p_\nu}\right)|\Phi_{\nu,m}\rangle.
\label{eq:orbit_fourier_eigenvalue_main_new}
\end{align}
Moreover, the set of all $|\Phi_{\nu,m}\rangle$ forms an orthonormal basis of $\mathcal T_e$.
\end{proposition}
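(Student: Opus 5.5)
The plan is to rely entirely on the cyclic-shift identity \eqref{eq:cyclic_shift_main}, itself a consequence of the protected dynamics \eqref{eq:UDR54_translation_main}. With that identity in hand, both claims reduce to elementary facts about the discrete Fourier transform on $\mathbb Z_{p_\nu}$.

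\emph{Eigenvalue equation.} I would apply $U_{\rm DR54}$ term by term to Eq.~\eqref{eq:orbit_fourier_state_main_new}, which is allowed by linearity. Using Eq.~\eqref{eq:cyclic_shift_main}, this gives
\begin{align}
U_{\rm DR54}|\Phi_{\nu,m}\rangle
=\frac{1}{\sqrt{p_\nu}}\sum_{t=0}^{p_\nu-1}e^{-2\pi i mt/p_\nu}|T^{t+1}\mathbf a_\nu\rangle_e .
\end{align}
Next I would reindex with $s=t+1$. The term $t=p_\nu-1$ maps to $s=p_\nu$, and there I use $T^{p_\nu}\mathbf a_\nu=\mathbf a_\nu$ to identify it with $s=0$. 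Since $e^{-2\pi i m s/p_\nu}$ is $p_\nu$-periodic in $s$, the sum over $s$ runs cleanly over $0,\ldots,p_\nu-1$. The phase then factors as $e^{-2\pi i m(s-1)/p_\nu}=e^{2\pi i m/p_\nu}e^{-2\pi i ms/p_\nu}$, which yields Eq.~\eqref{eq:orbit_fourier_eigenvalue_main_new}. The only point requiring care is the sign convention: the $e^{-}$ in the state produces the $e^{+}$ eigenphase, and the wrap-around step is exactly where minimal periodicity of the orbit enters.

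\emph{Orthonormality.} The states $|\mathbf a\rangle_e$ with $\mathbf a\in\mathcal I_L$ are orthonormal computational-basis product states, since distinct labels give distinct bit strings. Distinct orbits are disjoint subsets of $\mathcal I_L$, so $\mathcal H_\nu\perp\mathcal H_{\nu'}$ for $\nu\neq\nu'$. Within one orbit, the states $T^t\mathbf a_\nu$ for $t=0,\ldots,p_\nu-1$ are pairwise distinct, because $p_\nu$ is the minimal period. Hence they form an orthonormal basis of $\mathcal H_\nu$. The inner product $\langle\Phi_{\nu,m}|\Phi_{\nu,m'}\rangle$ then reduces to $p_\nu^{-1}\sum_t e^{2\pi i(m-m')t/p_\nu}=\delta_{mm'}$, the standard geometric-sum identity for roots of unity with $m,m'\in\{0,\ldots,p_\nu-1\}$.

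\emph{Completeness.} Within each orbit, the $p_\nu$ Fourier states are $p_\nu$ orthonormal vectors in the $p_\nu$-dimensional space $\mathcal H_\nu$, so they span it; equivalently, the DFT matrix is unitary. Since $\mathcal I_L=\bigsqcup_\nu\mathcal O_\nu$, we have $\mathcal T_e=\bigoplus_\nu\mathcal H_\nu$ with $\sum_\nu p_\nu=|\mathcal I_L|=\dim\mathcal T_e$. Taking the union over $\nu$ of these orbit bases therefore gives an orthonormal basis of $\mathcal T_e$.

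I expect no genuine obstacle. The one place to be careful is the use of minimal period: it guarantees both the distinctness of the orbit states and the correct wrap-around in the reindexing.
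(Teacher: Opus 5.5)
Your proposal is correct and follows essentially the same route as the paper: you apply the cyclic-shift identity \eqref{eq:cyclic_shift_main} termwise and reindex; you get orthogonality from disjoint computational-basis support across orbits and from the roots-of-unity sum within an orbit; and you get completeness from $\sum_\nu p_\nu=\dim\mathcal T_e$. One small correction: the wrap-around step only needs $T^{p_\nu}\mathbf a_\nu=\mathbf a_\nu$, which holds for any period, so minimality of $p_\nu$ enters only through the distinctness of the orbit states, which you rightly make explicit where the paper leaves it implicit.
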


\begin{proof}
Using Eq.~\eqref{eq:cyclic_shift_main}, we find
\begin{align}
U_{\rm DR54}|\Phi_{\nu,m}\rangle
&=
\frac{1}{\sqrt{p_\nu}}
\sum_{t=0}^{p_\nu-1}
\exp\left(-\frac{2\pi i mt}{p_\nu}\right)
|T^{t+1}\mathbf a_\nu\rangle_e
\nonumber\\
&=
\frac{1}{\sqrt{p_\nu}}
\sum_{t'=0}^{p_\nu-1}
\exp\left[-\frac{2\pi i m(t'-1)}{p_\nu}\right]
|T^{t'}\mathbf a_\nu\rangle_e
\nonumber\\
&=
\exp\left(\frac{2\pi i m}{p_\nu}\right)|\Phi_{\nu,m}\rangle.
\label{eq:main_fourier_proof_chain}
\end{align}
States from different orbits have disjoint computational-basis support. Within a fixed orbit, orthonormality follows from the orthogonality of finite Fourier modes. Since the total number of states is $\sum_\nu p_\nu\!=\!|\mathcal I_L|\!=\!\dim\mathcal T_e$, the orbit-Fourier states form a complete orthonormal basis of $\mathcal T_e$. More details are given in Appendix~\ref{app:orbit_fourier}.
\end{proof}

Eq.~\eqref{eq:orbit_fourier_eigenvalue_main_new} is the main exact spectral result of the paper. The scar eigenphases are fixed by the periods of finite translation orbits. Since distinct orbits can have the same period, and different periods can share rational phases, eigenphase degeneracies are common. A numerical eigensolver may therefore output arbitrary orthonormal combinations inside a degenerate protected eigenspace. The invariant diagnostic is the protected-sector weight $w_\alpha\!=\!\langle\phi_\alpha|\Pi_e|\phi_\alpha\rangle$, which equals one for any eigenvector fully supported in $\mathcal T_e$.
% For example, the vacuum orbit has $p\!=\!1$ and gives the product scar $|0\rangle^{\otimes N}$ with eigenphase zero. The one-soliton orbit generated by $|1100\cdots0\rangle$ has period $L$ and gives $L$ scars with phases $2\pi m/L$. For $N\!=\!10$, one has $L\!=\!5$ and $|\mathcal I_5|\!=\!F_4+F_6\!=\!11$: one vacuum scar, five one-soliton scars, and five two-soliton hard-core scars. This explains why numerical spectra reveal more low-entanglement protected eigenstates than those obtained from a single one-soliton orbit alone.

\subsection{Entanglement bound}
The exact scars are not thermal eigenstates. This is already visible from their wave functions: each one is a coherent superposition of product states belonging to a single finite translation orbit. Since the orbit length is at most $L$, the number of product components is at most $L$. This immediately bounds the Schmidt rank across any bipartition.
\begin{proposition}
\label{prop:entanglement_bound_main}
For any bipartition $A\cup\bar A$ of the physical chain and any orbit-Fourier scar $|\Phi_{\nu,m}\rangle$, the von Neumann entropy satisfies
\begin{align}
S_A(|\Phi_{\nu,m}\rangle)
\le
\log p_\nu
\le
\log L
=
\log(N/2).
\label{eq:entropy_bound_main_new}
\end{align}
Also, the same upper bound holds for all positive R\'enyi entropies.
\end{proposition}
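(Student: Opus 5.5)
The plan is to bound the Schmidt rank of $|\Phi_{\nu,m}\rangle$ across the cut $A\cup\bar A$, and then use the fact that every Rényi entropy is at most the logarithm of the Schmidt rank.

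First, I will use the definition in Eq.~\eqref{eq:orbit_fourier_state_main_new}. The state $|\Phi_{\nu,m}\rangle$ is a linear combination of the $p_\nu$ vectors $|T^t\mathbf a_\nu\rangle_e$. Each of these is a computational-basis product state on the physical chain, by Eq.~\eqref{eq:even_odd_dimer_main}. So each factorizes for any bipartition as $|T^t\mathbf a_\nu\rangle_e=|\alpha_t\rangle_A\otimes|\beta_t\rangle_{\bar A}$, where $|\alpha_t\rangle_A$ and $|\beta_t\rangle_{\bar A}$ are computational-basis states. It follows that
\begin{align}
|\Phi_{\nu,m}\rangle=\frac{1}{\sqrt{p_\nu}}\sum_{t=0}^{p_\nu-1}e^{-2\pi i mt/p_\nu}|\alpha_t\rangle_A\otimes|\beta_t\rangle_{\bar A}.
\end{align}

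Second, I will read off the reduced density matrix. We have $\rho_A=\mathrm{Tr}_{\bar A}|\Phi_{\nu,m}\rangle\langle\Phi_{\nu,m}|$, and its range is contained in $\mathrm{span}\{|\alpha_t\rangle_A\}$. This span has dimension at most $p_\nu$, so $\mathrm{rank}\,\rho_A\le p_\nu$. Equivalently, the Schmidt rank is at most $p_\nu$. The span can be strictly smaller, because different $t$ may give the same $|\alpha_t\rangle_A$; this only helps the bound. Let $\{\lambda_k\}$ be the nonzero spectrum of $\rho_A$, so it has $r\le p_\nu$ entries summing to one.

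Third, I will apply the standard bounds. Concavity, or maximization of entropy at the uniform distribution, gives $S_A=-\sum_k\lambda_k\log\lambda_k\le\log r\le\log p_\nu$. For Rényi index $\alpha\in(0,1)\cup(1,\infty)$, the entropy $S_\alpha=\frac{1}{1-\alpha}\log\sum_k\lambda_k^\alpha$ is nonincreasing in $\alpha$. Its $\alpha\to0$ limit is the Hartley entropy $\log r$, so $S_\alpha\le\log r\le\log p_\nu$ for every positive $\alpha$. The same conclusion follows directly from Hölder's or Jensen's inequality on $\sum_k\lambda_k^\alpha$.

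Finally, I will use $p_\nu\mid L$ from Eq.~\eqref{eq:orbit_definition_main}, which gives $p_\nu\le L=N/2$. The chain $S_A\le\log p_\nu\le\log L=\log(N/2)$ in Eq.~\eqref{eq:entropy_bound_main_new} follows. No step is a genuine obstacle. The only point needing care is the observation that each $|T^t\mathbf a_\nu\rangle_e$ remains a product state for an arbitrary, possibly noncontiguous, bipartition $A$. This holds because these are computational-basis states, so the argument does not depend on the geometry of $A$.
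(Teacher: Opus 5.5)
Your proposal is correct and follows essentially the same route as the paper: you factor each orbit basis state $|T^t\mathbf a_\nu\rangle_e$ as a product across the cut, conclude that the Schmidt rank is at most $p_\nu$, bound the von Neumann and all positive R\'enyi entropies by the logarithm of that rank, and finish with $p_\nu\mid L$. Your justification of the R\'enyi case, via monotonicity in $\alpha$ down to the Hartley entropy $\log r$, spells out a step the paper only asserts.
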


\begin{proof}
Each basis state in the orbit is a computational-basis product state and therefore factorizes across any bipartition as $|T^t\mathbf a_\nu\rangle_e\!=\!|\ell_t\rangle_A\otimes|r_t\rangle_{\bar A}$. Substituting this factorization into Eq.~\eqref{eq:orbit_fourier_state_main_new} gives an expansion of $|\Phi_{\nu,m}\rangle$ as at most $p_\nu$ product terms across the cut. Hence its Schmidt rank is at most $p_\nu$. The entropy of a state with Schmidt rank $\chi$ is bounded by $\log\chi$, giving Eq.~\eqref{eq:entropy_bound_main_new}. Appendix~\ref{app:entanglement_bound} gives the full proofs.
\end{proof}

This logarithmic bound is parametrically smaller than the Page volume-law entanglement expected for generic Floquet eigenstates. It is also the reason why the scar states appear as a low-entanglement band in numerical spectra. The bound should be interpreted as a statement about the analytic orbit-Fourier basis. If a degenerate scar eigenspace is diagonalized numerically, the returned eigenvectors may be arbitrary mixtures of several orbit-Fourier scars and can have larger entanglement. The protected subspace nevertheless always admits the low-entanglement orbit-Fourier basis constructed above.

In a nutshell, the protected sector contains
\begin{align}
\mathcal N_{\rm scar}=F_{L-1}+F_{L+1}\sim\varphi^L
\label{eq:summary_nscar_main}
\end{align}
exact Floquet eigenstates, with eigenphases determined by finite translation orbits and entanglement bounded by $\log(N/2)$. 
This establishes an exponentially large but exponentially sparse family of exact sub-volume-law scars in a qubit Floquet circuit. Fig.~\ref{fig:1} summarizes the two analytic consequences of the construction. 
First, the hard-core constraint produces an exponentially large protected space whose dimension is fixed purely by the combinatorics of independent sets on a cycle. 
The growth Eq.~\eqref{eq:summary_nscar_main} is exponential in $L$, but its fraction in the full qubit Hilbert space is $\mathcal N_{\rm scar}/2^N\!\sim\!(\sqrt{\varphi}/2)^N$, and therefore vanishes exponentially. 
This scaling is the sense in which the construction realizes weak rather than strong ergodicity breaking. Second, the protected translation dynamics gives an explicit scar basis. 
Each translation orbit $\mathcal O_\nu$ contributes $p_\nu$ orbit-Fourier eigenstates with eigenphases $2\pi m/p_\nu$. 
For the representative case $N\!=\!10$, the protected sector contains one vacuum scar and two nontrivial period-five hard-core soliton orbits, giving $F_4+F_6\!=\!11$ exact scars in total. 
The entanglement data in Fig.~\ref{fig:1}(b) are obtained directly from the analytic wave functions in Eq.~\eqref{eq:orbit_fourier_state_main_new}. 
All points lie below the bound $S_{N/2}\!\leq\!\log L$, confirming that the exact scar basis is sub-volume-law even though the number of scars grows exponentially.

\begin{figure}[bt]
\centering
\includegraphics[width=\columnwidth]{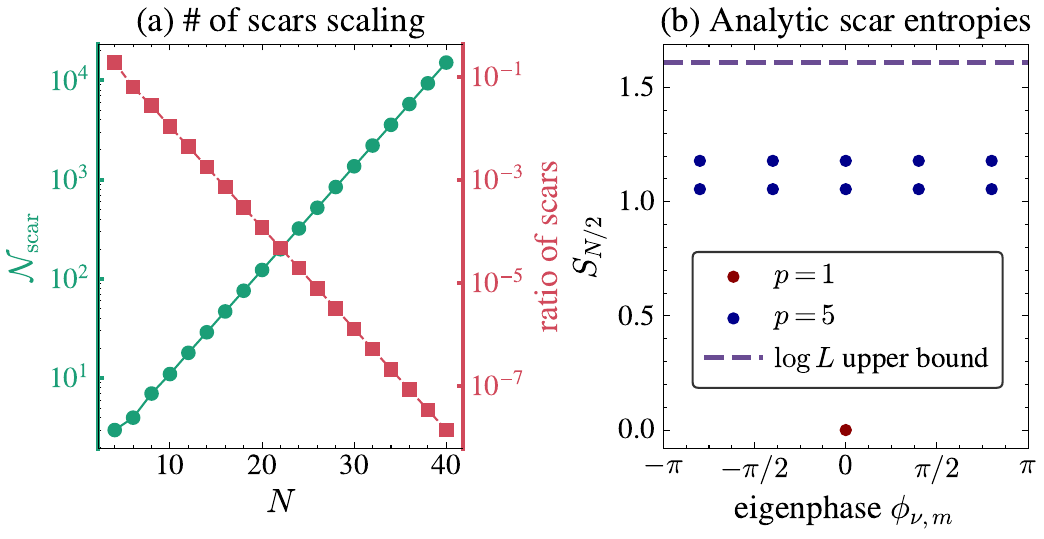} 
\caption{
\emph{Analytic structure of the Fibonacci-dimensional scar sector}.
(a) Number of exact scar eigenstates as a function of system size. 
For a chain of $N=2L$ qubits, the protected hard-core soliton sector has dimension $\mathcal N_{\rm scar}=F_{L-1}+F_{L+1}$, the Fibonacci count of hard-core configurations on a cycle of length $L$. 
The exponential growth $\mathcal N_{\rm scar}\!\sim\!\varphi^L$ is shown together with the exponentially small Hilbert-space fraction $\mathcal N_{\rm scar}/2^N$ when plotted on the secondary axis.
(b) Half-chain entanglement entropies of the analytically constructed orbit-Fourier scar eigenstates for $N\!=\!10$, grouped by their translation-orbit period $p_\nu$. 
The vacuum orbit has $p_\nu\!=\!1$ and gives the product scar $|0\rangle^{\otimes N}$, while the nontrivial hard-core soliton orbits have period $p_\nu\!=\!5$. 
The dashed line marks the rigorous upper bound $S_{N/2}\!\leq\!\log L$, illustrating the sub-volume-law entanglement of the scar basis.
}
\label{fig:1}
\end{figure}

\section{Numerical results}
\label{sec:numerics}

The numerical results in this section show how the exact protected sector appears in finite decorated Rule-54 circuits and how the remaining states behave once site-dependent non-diagonal decorations are included. We focus on three diagnostics: eigenstate entanglement, real-time entanglement and fidelity dynamics, and bulk quasienergy statistics after removing the protected sector. Together these diagnostics illustrate the coexistence of exactly solvable orbit-Fourier scars with a thermal-like decorated Floquet background.

\begin{figure*}[bt]
\centering
\includegraphics[width=\textwidth]{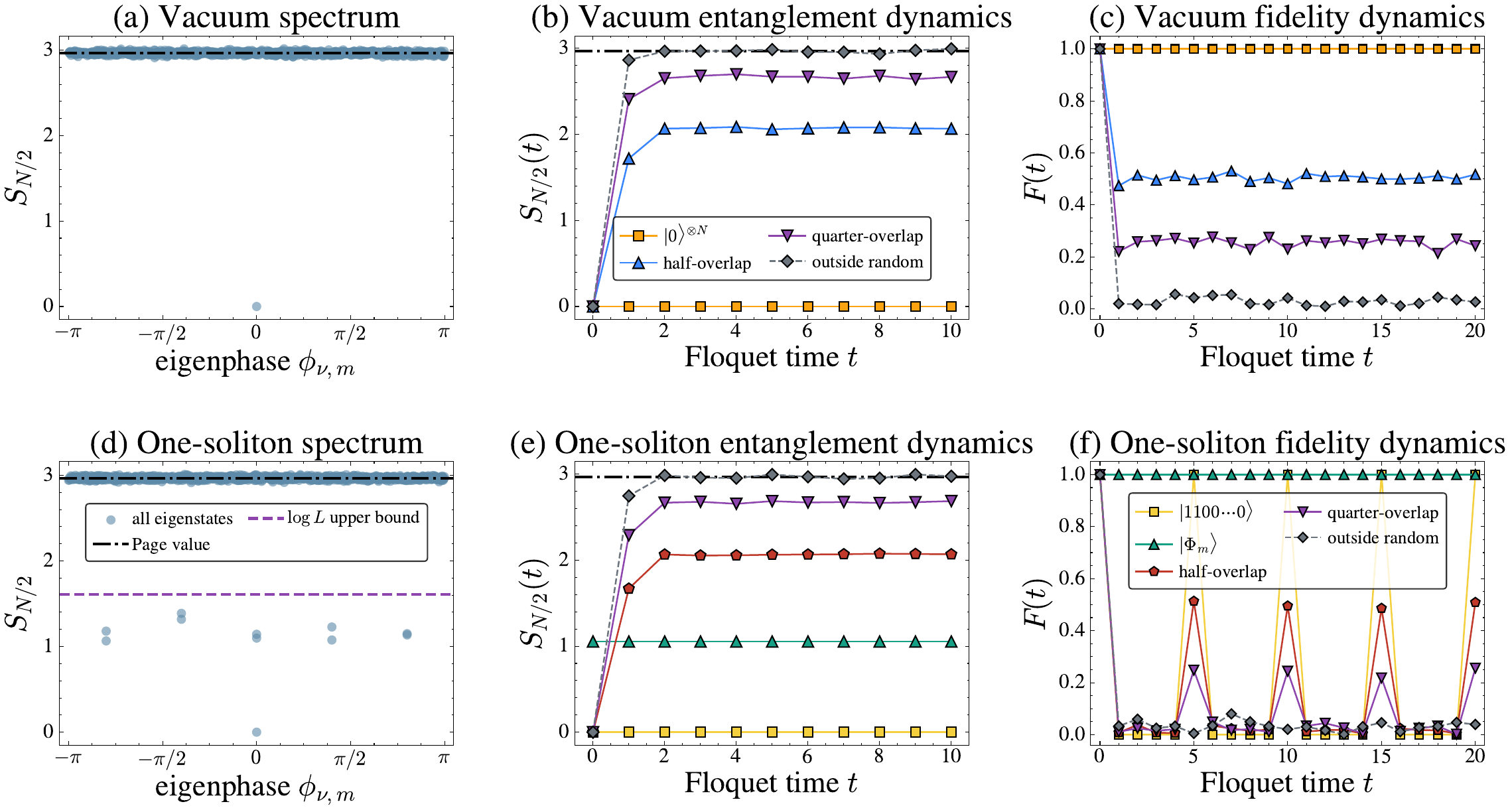} 
\caption{
\emph{Numerical signatures of exact scars in representative projector-decorated Rule-54 circuits.}
The top row shows a construction protecting the vacuum orbit, while the bottom row shows a construction protecting the one-soliton translation orbit generated by $|1100\cdots0\rangle$ for $N\!=\!10$.
(a,d) Half-chain entanglement entropy $S_{N/2}$ of Floquet eigenstates versus quasienergy eigenphase $\phi$. 
The bulk eigenstates concentrate near the finite-size Page value, whereas the protected scar eigenstates appear as low-entanglement outliers. 
In the one-soliton case, the dashed $\log p$ line gives the orbit-length entanglement bound for the analytic orbit-Fourier scars.
(b,e) Entanglement dynamics $S_{N/2}(t)$ from selected initial states. 
The vacuum scar remains a product state, while the one-soliton orbit product state remains inside the protected orbit and therefore has zero entanglement growth. 
The orbit-Fourier scar is an exact eigenstate and has time-independent entropy. 
Initial states with partial scar overlap show intermediate behavior, while outside product states rapidly approach the Page value.
(c,f) Fidelity dynamics $F(t)\!=\!|\langle\psi(0)|\psi(t)\rangle|$. 
The vacuum scar and the orbit-Fourier scar have frozen fidelity, whereas the one-soliton product state exhibits exact revivals with period $p=L$. 
States outside the protected orbit lose fidelity rapidly, and partial-overlap states retain only the component supported on the protected sector.
}
\label{fig:2}
\end{figure*}

Throughout this section the half-chain entanglement entropy is defined as $S_{N/2}(|\psi\rangle)\!=\!-{\rm Tr}\,\rho_A\log\rho_A$, where $\rho_A\!=\!{\rm Tr}_{\bar A}|\psi\rangle\langle\psi|$ and $A$ contains $N/2$ consecutive physical qubits. 
For comparison we use the finite-size Page value for a bipartition with dimensions $d_A\!\le\!d_B$,
\begin{align}
S_{\rm Page}
=
\sum_{k=d_B+1}^{d_Ad_B}\frac{1}{k}
-
\frac{d_A-1}{2d_B}.
\label{eq:numerical_page_value}
\end{align}
For equal halves of an $N$-qubit chain, $d_A\!=\!d_B\!=\!2^{L}$. 
The return fidelity of an initial state $|\psi(0)\rangle$ is
\begin{align}
F(t)
=
|\langle\psi(0)|U_{\rm DR54}^t|\psi(0)\rangle|.
\label{eq:numerical_fidelity}
\end{align}

The decorations are generated by sampling independent complex random matrices on three-qubit supports, symmetrizing them into Hermitian matrices $A_c^{(e/o)}$, and projecting them as in Eq.~\eqref{eq:G_generators_main}. 
The local gates are then exponentiated to obtain $\exp(i\lambda P A P)$. 
This choice preserves the protected path exactly but makes the complement non-diagonal and site dependent. 
In the calculations below, we use exact diagonalization at $N=10$ for the scar spectrum and dynamics diagnostics, and $N=8$ for the bulk quasienergy-spacing statistics after removing the protected sector. 

\subsection{Eigenstate entanglement and dynamics}
Fig.~\ref{fig:2} shows two representative projector-decorated Rule-54 circuits. 
The top row protects the vacuum orbit, whose only basis state is $|0\rangle^{\otimes N}$. 
The bottom row protects the one-soliton translation orbit generated by the product state $|1100\cdots0\rangle$. 
These two examples isolate the two elementary dynamical mechanisms present in the general Fibonacci construction: a fixed product scar and a nontrivial finite translation orbit. 
They should be viewed as representative target-orbit realizations of the general mechanism developed in Sec.~\ref{sec:fibonacci_scar_sector}.

Figs.~\ref{fig:2}(a) and \ref{fig:2}(d) show the half-chain entanglement entropy of Floquet eigenstates as a function of quasienergy eigenphase. 
In both cases, most eigenstates form a high-entanglement band close to the finite-size Page value. 
The protected states appear as low-entanglement outliers. 
For the vacuum construction, the protected state is the product eigenstate $|0\rangle^{\otimes N}$ with zero entanglement. 
For the one-soliton construction, the protected orbit has period $p\!=\!L$; hence the analytic orbit-Fourier scars satisfy \eqref{eq:entropy_bound_main_new}.
The dashed $\log p$ line in Fig.~\ref{fig:2}(d) marks this analytic bound. 
The low-entanglement protected eigenstates lie parametrically below the Page bulk, in agreement with the sub-volume-law bound derived in Proposition~\ref{prop:entanglement_bound_main}.

Figs.~\ref{fig:2}(b) and \ref{fig:2}(e) show entanglement dynamics from selected product or partially overlapping initial states. 
In the vacuum construction, the initial state $|0\rangle^{\otimes N}$ remains an exact product eigenstate and hence has $S_{N/2}(t)\!=\!0$ for all times. 
Initial states with partial overlap with the vacuum scar, labelled as half-overlap and quarter-overlap in the figure, retain a nonthermal component but also contain components outside the protected sector; their entanglement therefore grows to an intermediate value. 
A product state chosen outside the protected path rapidly approaches the Page value.

The one-soliton construction separates two different protected objects. 
The orbit product state $|1100\cdots0\rangle$ is not a Floquet eigenstate. 
Instead, it is translated around the protected orbit: $|1100\cdots0\rangle
\!\to\!
|001100\cdots0\rangle
\!\to\!
|00001100\cdots0\rangle
\!\to\!
\cdots$.
Every state along this trajectory is a computational-basis product state, so its entanglement remains zero. 
By contrast, the orbit-Fourier scar $|\Phi_m\rangle$ is an exact Floquet eigenstate. 
It is generally not a product state at $t=0$ because it is a coherent superposition of several translated product configurations; consequently its entropy is nonzero but time independent. 

Figs.~\ref{fig:2}(c) and \ref{fig:2}(f) show the corresponding fidelity dynamics. 
For the vacuum scar, the fidelity is frozen at unity, $F(t)\!=\!1$ and $|\psi(0)\rangle\!=\!|0\rangle^{\otimes N}$.
The one-soliton orbit-Fourier scar also has frozen fidelity, since time evolution only produces an overall Floquet phase. 
The one-soliton orbit product state instead has exact revivals with period $p\!=\!L$, with $F(t)\!=\!1$ for $t\!=\!0,L,2L,\ldots$, and small overlap with the initial product state at intermediate times. 
This is the finite-orbit version of scarred dynamics: the initial state does not thermalize, but it is not stationary unless it is first diagonalized into the orbit-Fourier basis. 
States outside the protected orbit lose fidelity rapidly, while partial-overlap states inherit a residual or revival component from their projection onto the protected sector.

\subsection{Bulk quasienergy statistics}
\begin{figure}[bt]
\centering
\includegraphics[width=\columnwidth]{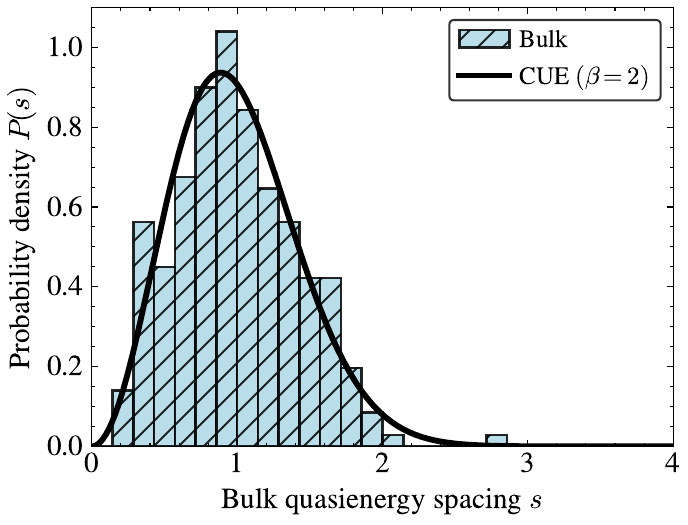} 
\caption{
\emph{Bulk quasienergy spacing statistics after removing the exact scar sector.}
We diagonalize the decorated Floquet unitary, remove the protected scar subspace, and compute the normalized nearest-neighbor quasienergy spacing 
$s\!=\!\Delta\theta/\langle\Delta\theta\rangle$ from the remaining bulk eigenphases on the unit circle. 
The histogram is compared with the $\beta\!=\!2$ random-matrix prediction appropriate for a Floquet unitary without a time-reversal-like antiunitary symmetry, denoted here as the CUE. 
}
\label{fig:3}
\end{figure}
The previous diagnostics demonstrate the protected nonthermal sector and its dynamical consequences. We next examine the decorated complement by computing the quasienergy statistics of the bulk block. After removing the exact protected sector, the remaining spectrum exhibits level repulsion and is compared with the circular-unitary random-matrix benchmark.

For the quasienergy-spacing analysis in Fig.~\ref{fig:3}, the exact scar sector is removed before computing level spacings. 
Equivalently, one works with the bulk block of the unitary acting on the orthogonal complement of the protected sector. 
Let $\{\theta_\alpha\}$ denote the sorted bulk quasienergies on the unit circle. 
We define the circular spacings
\begin{align}
\Delta\theta_\alpha
=
\theta_{\alpha+1}-\theta_\alpha,
\quad
\theta_{\alpha+\mathcal N_{\rm bulk}}\equiv \theta_\alpha+2\pi,
\label{eq:numerical_circular_spacing}
\end{align}
and normalize them by their mean,
\begin{align}
s_\alpha
=
\frac{\Delta\theta_\alpha}{\langle\Delta\theta\rangle}.
\label{eq:numerical_normalized_spacing}
\end{align}
The resulting histogram is compared with the $\beta\!=\!2$ random-matrix spacing distribution,
\begin{align}
P_{\beta=2}(s)
=
\frac{32}{\pi^2}s^2\exp\left(-\frac{4s^2}{\pi}\right),
\label{eq:numerical_cue_spacing}
\end{align}
which is the Wigner-surmise form for the unitary symmetry class. 
For a Floquet unitary without a time-reversal-like antiunitary symmetry, the appropriate circular-ensemble reference is CUE; its local spacing statistics are in the same $\beta\!=\!2$ universality class as GUE. 
This is the relevant comparison for the present simulations because the projected decorations are generated from generic complex Hermitian matrices and therefore do not impose a real or time-reversal-symmetric structure. Fig.~\ref{fig:3} shows that the bulk spacing distribution has level repulsion and is broadly consistent with the CUE curve. 

Taken together, Figs.~\ref{fig:2} and \ref{fig:3} support the intended separation between the two parts of the construction. 
Inside the protected sector, the behavior is exactly solvable and follows the orbit-Fourier theory of Sec.~\ref{sec:fibonacci_scar_sector}. 
Outside the protected sector, the site-dependent projected decorations produce a high-entanglement spectrum, rapid entanglement growth from generic product states, fidelity decay, and bulk quasienergy statistics compatible with the $\beta=2$ random-matrix class. 
The numerical results therefore complement the analytic construction by showing that the exact Fibonacci scar mechanism can coexist with a thermal-like Floquet background.

\section{Digital quantum circuit realization}
\label{sec:digital_realization}
This section gives a concrete implementation protocol. 
The essential point is that both ingredients of the model are local: the bare Rule-54 update is a reversible three-qubit Boolean gate, while the decoration is a projected three-qubit unitary that is designed to be the identity on the protected local patterns.

The local Rule-54 update is the three-qubit reversible gate in Eqs.~\eqref{eq:Rj_action_main} and~\eqref{eq:chi_rule54_main}, equivalently $y'\!=\!y\oplus x\oplus z\oplus xz$.
This gate can be compiled without ancillas using two CNOT gates and one Toffoli gate: ${\rm CNOT}_{x\to y}$, ${\rm CNOT}_{z\to y}$ and ${\rm CCX}_{x,z\to y}$.
The first CNOT implements $y\mapsto y\oplus x$, the second implements $y\mapsto y\oplus x\oplus z$, and the Toffoli implements the final nonlinear term $xz$. 
Thus the neighboring qubits are copied through as controls, while only the central qubit is updated. 
The even and odd Rule-54 layers $U_e$ and $U_o$ are obtained by applying these local gates to all even and odd centers, respectively.
The decoration gates inserted after the Rule-54 half-steps are those defined in Eq.~\eqref{eq:decorations_main}.
Here $A_c^{(e/o)}$ is a three-qubit Hermitian operator and $\hat P_c^{O/E}$ projects onto the local forbidden-pattern subspace. 
Therefore $d_c^{(e/o)}$ is a three-qubit unitary which acts nontrivially only on local patterns that are not compatible with the protected hard-core sector at that stage of the evolution. 
On the allowed local patterns, the projector annihilates the state and the gate reduces to the identity. 
This is the circuit-level version of the exact identities in Eq.~\eqref{eq:D_identity_main}.

For a first digital implementation, a hardware-friendlier choice is the diagonal projected decoration
\begin{align}
d_{c,\mathrm{diag}}^\mu
=
\sum_{\tau\in X_c^\mu}
|\tau\rangle\langle\tau|
+
\sum_{\tau\notin X_c^\mu}
e^{i\theta_{c,\tau}}
|\tau\rangle\langle\tau|,
\quad
\mu=E,O.
\label{eq:digital_diagonal_decoration}
\end{align}
Equivalently,
\begin{align}
d_{c,\mathrm{diag}}^\mu
=
\prod_{\tau\notin X_c^\mu}
C_\tau(\theta_{c,\tau}),
\quad
C_\tau(\theta)
=
\mathbb I+
(e^{i\theta}-1)|\tau\rangle\langle\tau|.
\label{eq:digital_pattern_phase}
\end{align}
Each factor $C_\tau(\theta)$ is a pattern-controlled phase gate. 
It can be implemented by flipping those qubits for which $\tau_i\!=\!0$, applying a multi-controlled phase on the all-one pattern, and then undoing the flips. 
This diagonal construction is a restricted case of the projector decoration. 
It preserves the exact scar sector for the same reason as Eq.~\eqref{eq:D_identity_main}: every allowed local pattern receives unit phase, while only forbidden patterns acquire nontrivial phases.

One decorated Floquet period is then implemented according to Eq.~\eqref{eq:UDR54_main}.
The ordering follows the protected trajectory. 
After $U_e$, a state initially in $\mathcal T_e$ lies in $\mathcal T_o$, so the inserted decoration $D_e$ must be built from the odd-sector forbidden projectors and acts as the identity on $\mathcal T_o$. 
After $U_o$, the state returns to $\mathcal T_e$, so $D_o$ is built from the even-sector forbidden projectors and acts as the identity on $\mathcal T_e$. 
Consequently, the ideal digital circuit obeys the same protected action as Eq.~\eqref{eq:no_leakage_main}.

The most direct measurable signature is the one-soliton translation orbit. 
Prepare the product state
\begin{align}
|\psi(0)\rangle
=
|1100\cdots0\rangle .
\label{eq:digital_one_soliton_state}
\end{align}
In the ideal protected circuit, this state evolves as $|1100\cdots0\rangle
\!\to\!
|001100\cdots0\rangle
\!\to\!
|00001100\cdots0\rangle
\!\to\!
\cdots,$
with period $L$. 
This dynamics can be read out directly from computational-basis samples. 
If $x_t$ denotes the bit string predicted by the protected orbit after $t$ Floquet periods, then the tracking probability $P_{\rm track}(t)\!=\!\Pr[x_t]$,
equals one in the ideal circuit. 
The return probability $P_{\rm ret}(t)\!=\!\Pr[x_0]$,
revives at $t\!=\!0,L,2L,\ldots$. 
A more robust leakage diagnostic is the orbit probability $P_{\mathcal O}(t)\!=\!
\sum_{x\in\mathcal O}\Pr[x]$,
which remains one for any state evolving strictly inside the protected one-soliton orbit.

\begin{figure}[bt]
\centering
\includegraphics[width=0.5\textwidth]{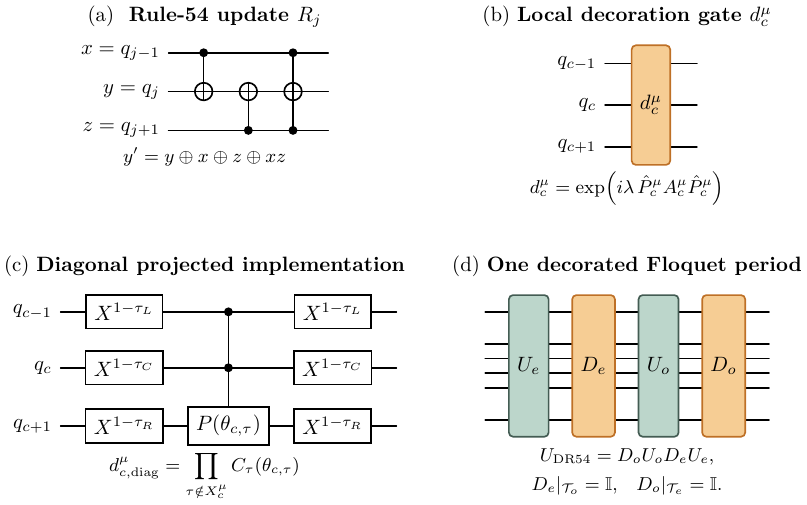} 
\caption{
\emph{Digital-circuit realization of the projector-decorated Rule-54 QCA.}
(a) The local Rule-54 update $R_j$ is implemented by two CNOT gates and one Toffoli gate, giving $y'\!=\!y\oplus x\oplus z\oplus xz$.
(b) The most general local decoration is a projected three-qubit unitary $d_c^\mu$.
(c) A hardware-friendly diagonal realization decomposes $d_c^\mu$ into pattern-controlled phase gates $C_\tau(\theta_{c,\tau})$ acting only on forbidden local patterns $\tau\notin X_c^\mu$. The $X$ gates convert the selected pattern $\tau$ to the all-one pattern before the controlled phase is applied and are then undone.
(d) One decorated Floquet period is $U_{\rm DR54}$. The decoration layers act trivially on the protected sectors reached after the corresponding Rule-54 half-steps.
}
\label{fig:digital_quantikz}
\end{figure}

One may also monitor the spatial motion of the soliton through the dimer-density profile. 
With $n_j\!=\!\frac{1-Z_j}{2}$ with $d_r\!=\!n_{2r}n_{2r+1}$,
the expectation value $\langle d_r(t)\rangle$ should form a translating peak on the coarse-grained lattice. 
This observable can be reconstructed from computational-basis bit strings and does not require full-state tomography. 
A coherent orbit-Fourier scar, Eq.~\eqref{eq:orbit_fourier_state_main_new},
would have frozen fidelity in the ideal circuit, but preparing this superposition is more demanding than preparing the orbit product state. 
Thus the most realistic first digital test is to measure $P_{\rm track}(t)$, $P_{\rm ret}(t)$, $P_{\mathcal O}(t)$, and $\langle d_r(t)\rangle$ for protected product states and compare them with generic outside product states.

Fig.~\ref{fig:digital_quantikz} clarifies how the abstract projector decoration can be represented at the circuit level. 
The generic gate $d_c^\mu\!=\!\exp(i\lambda\hat P_c^\mu A_c^\mu\hat P_c^\mu)$ is a three-qubit unitary whose nontrivial action is confined to the forbidden-pattern subspace selected by $\hat P_c^\mu$. 
For a first hardware-oriented implementation, one can restrict to diagonal projected phases. 
Each forbidden pattern $\tau$ is selected by conjugating with $X$ gates whenever $\tau_i=0$, applying a multi-controlled phase to the resulting all-one pattern, and uncomputing the $X$ gates. 
Because allowed patterns are not addressed by these controlled phases, the decoration remains exactly invisible on the protected Rule-54 trajectory.

\section{Conclusion and outlook}
\label{sec:conclusion}

We have constructed an exactly solvable family of QMBS in a decorated Rule-54 QCA. The central mechanism is to use the native soliton structure of Rule 54 as the protected skeleton and to decorate the remaining Hilbert space by local projected unitaries. In the bare Rule-54 circuit, the hard-core even-dimer sector is mapped to the odd-dimer sector and back with a coarse-grained translation. By inserting decorations that vanish on the sector reached after each half-step, the decorated Floquet unitary preserves this translation dynamics exactly, while remaining nontrivial outside the protected path.
The resulting scar sector is analytically controlled. Its basis states are labelled by hard-core configurations on a coarse-grained cycle, and its dimension is the Fibonacci number $F_{L-1}+F_{L+1}$. Thus the number of exact scar eigenstates grows exponentially with system size, while their fraction in the full qubit Hilbert space vanishes exponentially. This places the construction in the regime of weak ergodicity breaking rather than Hilbert-space fragmentation. The protected dynamics reduces to a finite permutation, and the exact scar eigenstates are obtained by Fourier transforming over the corresponding translation orbits. These orbit-Fourier states have explicitly known quasienergies and obey a sub-volume-law entanglement bound $S_A\!\leq\!\log p_\nu\leq\log(N/2)$ for arbitrary bipartitions.
For generic site-dependent non-diagonal projected decorations, finite-size simulations show a high-entanglement bulk, rapid entanglement growth and fidelity decay from outside product states, and quasienergy spacing statistics compatible with the random-matrix class after removing the protected sector. These observations support the picture of exact scars embedded in a thermal-like Floquet background.
This work also gives a qubit-QCA route to exact many-body scars that is different from SWAP-based or enlarged-alphabet constructions. The protected subspace is not obtained by restricting to a proper onsite alphabet, which is unavailable for qubits. Instead, the hard-core Rule-54 constraint creates an exponentially large constrained sector from qubit degrees of freedom alone. The construction therefore shows how deterministic finite-orbit structures of reversible automata can be converted into exact scar sectors by local projector decorations.

Several directions are natural. First, the same strategy can be applied to other reversible quantum cellular automaton with finite-orbit, solitonic, or kinetically constrained sectors. Rule 54 is a minimal interacting example, but the logic only requires a locally identifiable protected trajectory and a set of local forbidden-pattern projectors. This suggests a broader classification problem: which QCA admit exponentially large constrained sectors whose internal dynamics is a solvable permutation, and which local decorations make the complement thermal-like?
Second, the decorated Rule-54 circuit provides a useful setting for studying operator spreading and quantum nonstabilizerness~\cite{BravyiKitaev2005,Veitch2014,HowardCampbell2017,SeddonCampbell2019,LeoneOlivieroHamma2022,HaugPiroli2023,Iannotti2025,ViscardiDalmonteHammaTirrito2025,Collura2026,Korbany2025Magic,QianWang2025,HuangLiLeeZhong2025}. The protected orbit-Fourier scars have low entanglement because their support contains only finitely many product configurations, but the decorated complement can generate complex many-body dynamics. It would be interesting to quantify how stabilizer R\'enyi entropy, operator stabilizer entropy, and other magic diagnostics spread from different classes of initial states: protected orbit states, coherent orbit-Fourier scars, partially overlapping states, and generic outside product states. Such diagnostics could reveal whether the scar sector suppresses not only entanglement growth but also the growth of nonstabilizer resources, while the complement develops extensive magic under the same local circuit.
Third, monitored and open-system variants of the construction are promising~\cite{LiChenFisher2018,SkinnerRuhmanNahum2019,ChanNandkishorePretkoSmith2019,BaoChoiAltman2020,IppolitiKhemani2021,HuangLiZhaoLiuZhong2024,LiZhongYu2025,ZhaoHuangZhangLiZhong2025}. Since the protected sector is defined by local forbidden-pattern projectors, one can imagine measuring precisely these local constraints during the evolution. Weak monitoring, postselection, or feedback based on the forbidden-pattern outcomes may either stabilize the scar sector, induce leakage out of it, or generate measurement-induced transitions in the entanglement dynamics. A monitored decorated Rule-54 QCA would provide a controlled platform where exact soliton orbits, projector-selected sectors, and measurement-induced entanglement growth can be studied within the same circuit architecture.
Related deformations could also connect this projector-decorated QCA viewpoint to non-Hermitian localization, skin effects, and digital simulations of non-Hermitian topology~\cite{HatanoNelson1996,Lee2016NH,YaoWang2018,OkumaKawabataShiozakiSato2020,AshidaGongUeda2020,XiaoLee2023QuantumTopology,LiWanZhong2024,WangLiZhong2025}.
Fourth, the digital implementation deserves further exploration. The local Rule-54 gate can be decomposed into elementary reversible gates, and hardware-friendly diagonal decorations can be implemented as pattern-controlled phases on forbidden local configurations. The most accessible near-term observables are bitstring-level signatures: one-soliton translation, return probability, protected-orbit probability, and dimer-density profiles. These do not require full tomography and can be tested first in noiseless and noisy circuit simulators before moving to small-scale quantum processors. A full hardware demonstration would require assessing gate depth, compilation overhead, readout errors, and the robustness of the protected dynamics under imperfect decorations.
Finally, an important theoretical question is the stability of the construction. The exact scars rely on exact projector invisibility, so generic perturbations that violate the local constraints will eventually induce leakage. However, structured perturbations that preserve the forbidden-pattern kernel, diagonal phase deformations, or monitored correction schemes may retain a long-lived scarred regime. Understanding which imperfections destroy the scar sector immediately and which only broaden it into a prethermal protected manifold would connect the exact construction developed here to experimentally realistic scarred dynamics.

In summary, the decorated Rule-54 QCA shows that exact many-body scars can be engineered from native soliton orbits of an interacting qubit automaton. The model combines exact analytic control of an exponentially large scar sector with a thermal-like decorated complement, and it offers a concrete starting point for studying scarred dynamics, magic growth, monitored entanglement, and digital quantum simulation in reversible cellular automata.

\begin{acknowledgements}
\noindent
H.-Z. Li would like to thank Ching Hua Lee for the friendly guidance, and acknowledges the helpful initial discussions with Shuo Liu and Xue-Jia Yu. J.-X.\ Zhong was supported by the National Natural Science Foundation of China (Grant Nos.\ 12374046 and 11874316), the Shanghai Science and Technology Innovation Action Plan (Grant No.\ 24LZ1400800), the National Basic Research Program of China (Grant No.\ 2015CB921103), and the Program for Changjiang Scholars and Innovative Research Teams in Universities (Grant No.\ IRT13093). H.-Z.\ Li is supported by the China Scholarship Council (CSC) Scholarship (Grant No.\ 202506890103).
\end{acknowledgements}

%%%%%%%%%%%%%%%%% APPENDIX %%%%%%%%%%%%%%%%%%%

\clearpage
\newpage
\onecolumngrid

\appendix

\section{Rule-54 quantum cellular automaton}
\label{app:rule54_translation}

This appendix gives a detailed introduction to the Rule-54 QCA used in the main text and proves the hard-core soliton translation identity.  The model is a reversible, discrete-time quantum circuit on a one-dimensional qubit chain.  We label the physical sites by $j\!=\!0,1,\ldots,N-1$ and impose periodic boundary conditions.  The computational basis at site $j$ is denoted by $|s_j\rangle_j$, with $s_j\!\in\!\{0,1\}$.  Throughout the paper we take $N\!=\!2L$ even.

The elementary Rule-54 gate is a three-site permutation unitary.  On a local triplet $(j-1,j,j+1)$ it acts as
\begin{align}
R_j|x,y,z\rangle_{j-1,j,j+1}
=
|x,\chi(x,y,z),z\rangle_{j-1,j,j+1},
\label{eq:app_rule54_local_gate}
\end{align}
where
\begin{align}
\chi(x,y,z)=(x+y+z+xz)\bmod 2.
\label{eq:app_rule54_boolean}
\end{align}
Thus the neighboring qubits $x$ and $z$ are copied through unchanged, while the central qubit $y$ is updated by a reversible Boolean rule.  Equivalently, the local matrix elements are
\begin{align}
\langle x',y',z'|R_j|x,y,z\rangle
=
\delta_{x',x}\delta_{z',z}\delta_{y',\chi(x,y,z)}.
\label{eq:app_rule54_matrix_elements}
\end{align}
The truth table of the central update is
\begin{align}
\begin{array}{ccccccccc}
x & 0 & 0 & 0 & 0 & 1 & 1 & 1 & 1 \\
y & 0 & 0 & 1 & 1 & 0 & 0 & 1 & 1 \\
z & 0 & 1 & 0 & 1 & 0 & 1 & 0 & 1 \\
\hline
\chi(x,y,z) & 0 & 1 & 1 & 0 & 1 & 1 & 0 & 0
\end{array}
\label{eq:app_rule54_truth_table}
\end{align}
This is the quantum version of the classical elementary cellular automaton Rule 54, represented as a unitary permutation of computational-basis states. Also it can be depicted by the map:
\begin{align}
    \vcenter{\hbox{\includegraphics[width=0.5\textwidth]{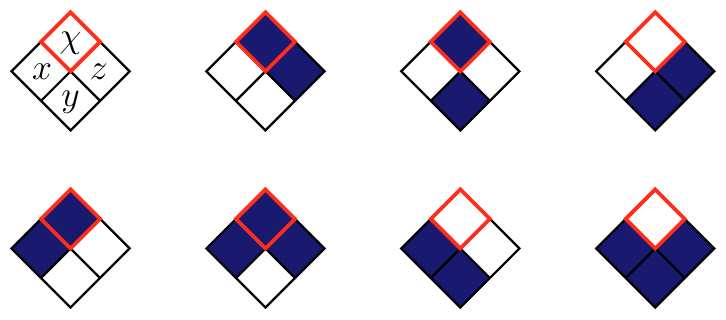}}}
\end{align}

The local gate is its own inverse.  Indeed,
\begin{align}
\chi(x,\chi(x,y,z),z)
&=
\left[x+(x+y+z+xz)+z+xz\right]\bmod 2
\nonumber\\
&=
\left[y+2x+2z+2xz\right]\bmod 2
\nonumber\\
&=y.
\label{eq:app_rule54_involution}
\end{align}
Therefore $R_j^2\!=\!\mathbb I$.  Since $R_j$ permutes computational-basis states bijectively, it is unitary. A full Rule-54 Floquet step is constructed from staggered even- and odd-center layers,
\begin{align}
U_e=\prod_{r=0}^{L-1}R_{2r},
\quad
U_o=\prod_{r=0}^{L-1}R_{2r+1},
\quad
U_{54}=U_oU_e.
\label{eq:app_rule54_floquet_layers}
\end{align}
Within each layer the gates commute.  This is because gates with the same center parity update distinct central sites; possible shared neighboring sites are used only as controls and are copied through.  The convention $U_{54}=U_oU_e$ fixes a direction for the coarse-grained translation.  Reversing the convention reverses this direction but does not change the construction.

We now prove the protected hard-core translation identity.  Let the coarse-grained lattice have length $L$, and let $a_r\!\in\!\{0,1\}$ denote the occupation of the even bond $(2r,2r+1)$.  The hard-core configuration space is
\begin{align}
\mathcal I_L=
\{\mathbf a=(a_0,\ldots,a_{L-1})\in\{0,1\}^L:
 a_ra_{r+1}=0\ \text{for all }r\},
\label{eq:app_IL_def}
\end{align}
with indices modulo $L$.  The even- and odd-dimer product states are
\begin{align}
|\mathbf a\rangle_e
&=
\bigotimes_{r=0}^{L-1}|a_r\rangle_{2r}|a_r\rangle_{2r+1},
\\
|\mathbf a\rangle_o
&=
\bigotimes_{r=0}^{L-1}|a_r\rangle_{2r+1}|a_r\rangle_{2r+2}.
\label{eq:app_even_odd_dimer_states}
\end{align}
An occupied coarse-grained site corresponds to a physical $11$ dimer, while an empty coarse-grained site corresponds to a physical $00$ dimer.  The hard-core condition forbids neighboring occupied dimers on the coarse-grained cycle.
Let $T$ be the coarse-grained translation defined by $(T\mathbf a)_r\!=\!a_{r-1}$.  We prove
\begin{align}
U_e|\mathbf a\rangle_e=|\mathbf a\rangle_o,
\quad
U_o|\mathbf a\rangle_o=|T\mathbf a\rangle_e,
\quad
U_{54}|\mathbf a\rangle_e=|T\mathbf a\rangle_e.
\label{eq:app_translation_identity_full}
\end{align}
For $|\mathbf a\rangle_e$, the physical qubits are $s_{2r}=a_r$ and $s_{2r+1}=a_r$.  Around the even center $2r$, the triplet is $(a_{r-1},a_r,a_r)$.  Applying the even layer gives
\begin{align}
s'_{2r}
&=\chi(a_{r-1},a_r,a_r)
\nonumber\\
&=(a_{r-1}+a_r+a_r+a_{r-1}a_r)\bmod2
\nonumber\\
&=(a_{r-1}+a_{r-1}a_r)\bmod2
\nonumber\\
&=a_{r-1},
\label{eq:app_even_layer_update}
\end{align}
where the last equality uses $a_{r-1}a_r\!=\!0$.  Since odd sites are not updated in the even layer, $s'_{2r+1}\!=\!a_r$.  Hence the output qubits on the physical bond $(2r,2r+1)$ are
\begin{align}
s'_{2r}=a_{r-1},
\quad
s'_{2r+1}=a_r,
\label{eq:app_even_layer_output}
\end{align}
which is precisely the odd-dimer state $|\mathbf a\rangle_o$.

Now start from $|\mathbf a\rangle_o$.  Then $s_{2r}\!=\!a_{r-1}$ and $s_{2r+1}\!=\!a_r$.  Around the odd center $2r+1$, the triplet is again $(a_{r-1},a_r,a_r)$.  Thus
\begin{align}
s''_{2r+1}=\chi(a_{r-1},a_r,a_r)=a_{r-1}.
\label{eq:app_odd_layer_update}
\end{align}
The even site $2r$ is not updated in $U_o$, so $s''_{2r}\!=\!a_{r-1}$.  Therefore
\begin{align}
s''_{2r}=s''_{2r+1}=a_{r-1}=(T\mathbf a)_r,
\label{eq:app_odd_layer_output}
\end{align}
which is exactly the even-dimer state $|T\mathbf a\rangle_e$.  This proves Eq.~\eqref{eq:app_translation_identity_full}.
The even hard-core soliton subspace is
\begin{align}
\mathcal T_e={\rm span}\{|\mathbf a\rangle_e:\mathbf a\in\mathcal I_L\}.
\label{eq:app_Te_def}
\end{align}
Eq.~\eqref{eq:app_translation_identity_full} implies
\begin{align}
U_{54}|_{\mathcal T_e}=T|_{\mathcal T_e}.
\label{eq:app_U54_equals_T}
\end{align}
Thus the hard-core sector is invariant under the bare Rule-54 Floquet step, and the restricted dynamics is exactly a translation on the coarse-grained lattice.

\section{Allowed local patterns}
\label{app:projector_images}
This appendix proves that the local pattern projectors used in the main text identify precisely the even- and odd-dimer hard-core sectors.

For a center site $c$, a local three-qubit pattern is denoted by $\tau\!=\!(\tau_L,\tau_C,\tau_R)$ on the sites $(c-1,c,c+1)$.  For the even-dimer hard-core sector, the allowed local patterns are
\begin{align}
X_{2r}^{E}=\{000,100,011\},
\quad
X_{2r+1}^{E}=\{000,001,110\}.
\label{eq:app_XE}
\end{align}
For the odd-dimer hard-core sector, the pattern sets are shifted by one site,
\begin{align}
X_{2r}^{O}=\{000,001,110\},
\quad
X_{2r+1}^{O}=\{000,100,011\}.
\label{eq:app_XO}
\end{align}
The local forbidden-pattern projector is
\begin{align}
P_c^\mu
=
\mathbb I_{c-1,c,c+1}
-
\sum_{\tau\in X_c^\mu}|\tau\rangle\langle\tau|_{c-1,c,c+1},
\quad \mu=E,O,
\label{eq:app_local_projector}
\end{align}
and $\hat P_c^\mu$ denotes its embedding into the full chain.  We define
\begin{align}
\Pi_e=\prod_{c=0}^{N-1}(\mathbb I-\hat P_c^E),
\quad
\Pi_o=\prod_{c=0}^{N-1}(\mathbb I-\hat P_c^O).
\label{eq:app_global_projectors}
\end{align}
Since all $P_c^\mu$ are diagonal in the computational basis, the products are unambiguous.

Then, we first show that $\Pi_e$ projects onto $\mathcal T_e$.  If $|\mathbf a\rangle_e\!\in\!\mathcal T_e$, then at an even center $2r$ the local triplet is
\begin{align}
(s_{2r-1},s_{2r},s_{2r+1})=(a_{r-1},a_r,a_r).
\label{eq:app_even_center_triplet}
\end{align}
Because $a_{r-1}a_r\!=\!0$, this triplet can only be $000$, $100$, or $011$.  Hence it lies in $X_{2r}^{E}$.  At an odd center $2r+1$, the triplet is
\begin{align}
(s_{2r},s_{2r+1},s_{2r+2})=(a_r,a_r,a_{r+1}),
\label{eq:app_odd_center_triplet}
\end{align}
which, using $a_ra_{r+1}\!=\!0$, can only be $000$, $001$, or $110$.  Hence it lies in $X_{2r+1}^{E}$.  Therefore $\hat P_c^E|\mathbf a\rangle_e\!=\!0$ for every $c$, and $\Pi_e|\mathbf a\rangle_e\!=\!|\mathbf a\rangle_e$.

Conversely, suppose a computational-basis state $|s_0\ldots s_{N-1}\rangle$ is not annihilated by any allowed-pattern projector, i.e. all its local triplets lie in the sets $X_c^E$.  From the condition at the odd center $2r+1$, the allowed triplets $000$, $001$, and $110$ all have the first two entries equal.  Therefore
\begin{align}
s_{2r}=s_{2r+1}
\label{eq:app_even_dimer_forced}
\end{align}
for all $r$.  Thus the state has the even-dimer form for some coarse-grained configuration $\mathbf a$ with $a_r\!=\!s_{2r}\!=\!s_{2r+1}$.  The condition at the even center $2r$ then says that the triplet $(a_{r-1},a_r,a_r)$ must lie in $\{000,100,011\}$.  If $a_r\!=\!1$, the only allowed triplet with the last two entries equal to $1$ is $011$, so $a_{r-1}\!=\!0$.  Hence $a_{r-1}a_r\!=\!0$ for all $r$.  Equivalently, $\mathbf a\!\in\!\mathcal I_L$.  Therefore the image of $\Pi_e$ is exactly $\mathcal T_e$:
\begin{align}
{\rm Im}\,\Pi_e=\mathcal T_e.
\label{eq:app_image_Pi_e}
\end{align}

The proof for $\Pi_o$ is identical after shifting all site labels by one.  Thus
\begin{align}
{\rm Im}\,\Pi_o=\mathcal T_o.
\label{eq:app_image_Pi_o}
\end{align}
Eqs.~\eqref{eq:app_image_Pi_e} and \eqref{eq:app_image_Pi_o} justify the projector identities used in the main text.

\section{Proof of protected decorated dynamics and locality}
\label{app:decorated_proof}

This appendix proves the exact protected action of the decorated circuit.  The decorated local generators are
\begin{align}
G_c^{(e)}=\hat P_c^O A_c^{(e)}\hat P_c^O,
\quad
G_c^{(o)}=\hat P_c^E A_c^{(o)}\hat P_c^E,
\label{eq:app_decorated_generators}
\end{align}
where $A_c^{(e)}$ and $A_c^{(o)}$ are arbitrary three-qubit Hermitian operators.  The corresponding gates are
\begin{align}
d_c^{(e)}=\exp(i\lambda_eG_c^{(e)}),
\quad
d_c^{(o)}=\exp(i\lambda_oG_c^{(o)}),
\label{eq:app_decorated_local_gates}
\end{align}
and the decoration layers are
\begin{align}
D_e=\overleftarrow{\prod_{c=0}^{N-1}}d_c^{(e)},
\quad
D_o=\overleftarrow{\prod_{c=0}^{N-1}}d_c^{(o)}.
\label{eq:app_decorated_layers}
\end{align}
The decorated Floquet unitary is
\begin{align}
U_{\rm DR54}=D_oU_oD_eU_e.
\label{eq:app_udr54}
\end{align}

Then, let $|\psi_o\rangle\!\in\!\mathcal T_o$.  Since ${\rm Im}\,\Pi_o\!=\!\mathcal T_o$, every local forbidden projector satisfies
\begin{align}
\hat P_c^O|\psi_o\rangle=0.
\label{eq:app_PO_annihilates_To}
\end{align}
Thus
\begin{align}
G_c^{(e)}|\psi_o\rangle
=
\hat P_c^O A_c^{(e)}\hat P_c^O|\psi_o\rangle
=0.
\label{eq:app_Ge_annihilates_To}
\end{align}
Since all positive powers of $G_c^{(e)}$ annihilate $|\psi_o\rangle$, the exponential acts as the identity:
\begin{align}
d_c^{(e)}|\psi_o\rangle
=
\exp(i\lambda_eG_c^{(e)})|\psi_o\rangle
=|\psi_o\rangle.
\label{eq:app_de_identity_on_To_local}
\end{align}
Therefore
\begin{align}
D_e|_{\mathcal T_o}=\mathbb I_{\mathcal T_o}.
\label{eq:app_De_identity_on_To}
\end{align}
The same argument gives
\begin{align}
D_o|_{\mathcal T_e}=\mathbb I_{\mathcal T_e}.
\label{eq:app_Do_identity_on_Te}
\end{align}
Combining these identities with Eq.~\eqref{eq:app_translation_identity_full}, for every $\mathbf a\!\in\!\mathcal I_L$ we obtain
\begin{align}
U_{\rm DR54}|\mathbf a\rangle_e
&=D_oU_oD_eU_e|\mathbf a\rangle_e
\nonumber\\
&=D_oU_oD_e|\mathbf a\rangle_o
\nonumber\\
&=D_oU_o|\mathbf a\rangle_o
\nonumber\\
&=D_o|T\mathbf a\rangle_e
\nonumber\\
&=|T\mathbf a\rangle_e.
\label{eq:app_decorated_translation_proof}
\end{align}
Hence
\begin{align}
U_{\rm DR54}|_{\mathcal T_e}=T|_{\mathcal T_e}.
\label{eq:app_decorated_translation_identity}
\end{align}
Since the image of $\Pi_e$ is $\mathcal T_e$, Eq.~\eqref{eq:app_decorated_translation_identity} also implies the no-leakage identity
\begin{align}
(\mathbb I-\Pi_e)U_{\rm DR54}\Pi_e=0.
\label{eq:app_no_leakage_identity}
\end{align}
Because $U_{\rm DR54}$ is unitary and $\mathcal T_e$ is invariant, the orthogonal complement $\mathcal T_e^\perp$ is also invariant.

Finally, we comment on circuit depth.  Each decoration gate has support on three consecutive physical sites.  Gates with overlapping supports need not commute, but the ordering in Eq.~\eqref{eq:app_decorated_layers} is part of the model definition and may be chosen for implementation convenience.  If one chooses a finite-color ordering of the center sites, the three-site supports can be arranged into a constant number of parallel sublayers.  On a one-dimensional ring, the overlap graph of length-three intervals has bounded degree, and a fixed number of colors independent of $N$ is sufficient.  Thus the decorated circuit remains a local finite-depth Floquet circuit.

\section{Counting the Fibonacci scar sector}
\label{app:fibonacci_counting}

We give the detailed counting of the hard-core configuration space $\mathcal I_L$.  Recall that $\mathcal I_L$ is the set of binary configurations $\mathbf a\!=\!(a_0,\ldots,a_{L-1})$ on a cycle, subject to $a_ra_{r+1}\!=\!0$ with indices modulo $L$.

\subsection{Transfer-matrix count}
The compatibility matrix for two neighboring occupations is
\begin{align}
M_{ab}
=
\begin{cases}
1, & ab=0,\\
0, & ab=1,
\end{cases}
\quad
M=
\begin{pmatrix}
1&1\\
1&0
\end{pmatrix}.
\label{eq:app_M_def}
\end{align}
For a periodic chain, a configuration contributes one unit to the count precisely when every adjacent pair is compatible, including the pair $(a_{L-1},a_0)$.  Therefore
\begin{align}
|\mathcal I_L|
&=
\sum_{a_0,\ldots,a_{L-1}=0}^{1}
M_{a_0a_1}M_{a_1a_2}\cdots M_{a_{L-2}a_{L-1}}M_{a_{L-1}a_0}
\nonumber\\
&=
\sum_{a_0=0}^{1}(M^L)_{a_0a_0}
\nonumber\\
&=
{\rm Tr}\,M^L.
\label{eq:app_trace_derivation}
\end{align}
The characteristic polynomial of $M$ is $\lambda^2-\lambda-1\!=\!0$, so the two eigenvalues are $\lambda_+\!=\!\varphi$ and $\lambda_-\!=\!-\varphi^{-1}$.  Hence
\begin{align}
|\mathcal I_L|
&={\rm Tr}\,M^L
\nonumber\\
&=
\lambda_+^L+\lambda_-^L
\nonumber\\
&=
\varphi^L+(-\varphi^{-1})^L.
\label{eq:app_lucas_derivation}
\end{align}
Using Binet's formula $F_n\!=\!(\varphi^n-(-\varphi^{-1})^n)/\sqrt5$, one obtains
\begin{align}
F_{L-1}+F_{L+1}
&=
\frac{\varphi^{L-1}+\varphi^{L+1}-(-\varphi^{-1})^{L-1}-(-\varphi^{-1})^{L+1}}{\sqrt5}
\nonumber\\
&=\varphi^L+(-\varphi^{-1})^L.
\label{eq:app_fibonacci_lucas_identity}
\end{align}
Thus
\begin{align}
|\mathcal I_L|=F_{L-1}+F_{L+1}.
\label{eq:app_lucas_fibonacci_final}
\end{align}
This is the Fibonacci count of the cycle, written in Fibonacci form.

\subsection{Fixed-soliton-number count}

Let $\mathcal I_{L,k}$ be the subset of $\mathcal I_L$ with exactly $k$ occupied coarse-grained sites.  For $k=0$, $|\mathcal I_{L,0}|\!=\!1$.  For $1\!\leq\!k\!\leq\!\lfloor L/2\rfloor$, the number of ways to place $k$ indistinguishable hard-core particles on a labelled cycle of length $L$ is
\begin{align}
|\mathcal I_{L,k}|
=
\frac{L}{L-k}\binom{L-k}{k}.
\label{eq:app_fixed_k_count}
\end{align}
To derive this, mark one occupied site as a root.  Moving clockwise from the root, let $g_j\!\geq\!1$ be the number of empty sites after the $j$-th occupied site and before the next occupied site.  Then
\begin{align}
g_1+g_2+\cdots+g_k=L-k,
\quad
 g_j\geq1.
\label{eq:app_gap_constraints}
\end{align}
The number of positive integer solutions is $\binom{L-k-1}{k-1}$.  There are $L$ possible positions for the rooted occupied site, but each unrooted configuration with $k$ occupied sites is counted $k$ times.  Therefore
\begin{align}
|\mathcal I_{L,k}|
&=
\frac{L}{k}\binom{L-k-1}{k-1}
\nonumber\\
&=
\frac{L}{L-k}\binom{L-k}{k}.
\label{eq:app_fixed_k_count_derivation}
\end{align}
Summing Eq.~\eqref{eq:app_fixed_k_count} over $k$ recovers Eq.~\eqref{eq:app_lucas_fibonacci_final}.

\section{Orbit-Fourier eigenstates and degeneracies}
\label{app:orbit_fourier}

Here we give the detailed proof that the orbit-Fourier states are exact eigenstates of the decorated Floquet unitary and explain how degeneracies should be handled in exact diagonalization.

Let
\begin{align}
\mathcal O_\nu=\{\mathbf a_\nu,T\mathbf a_\nu,\ldots,T^{p_\nu-1}\mathbf a_\nu\}
\label{eq:app_orbit_def}
\end{align}
be an orbit of minimal period $p_\nu$.  Since the decorated circuit acts as translation on $\mathcal T_e$, one has
\begin{align}
U_{\rm DR54}|T^t\mathbf a_\nu\rangle_e
=
|T^{t+1}\mathbf a_\nu\rangle_e,
\quad
 t=0,\ldots,p_\nu-1,
\label{eq:app_orbit_shift}
\end{align}
where $t+1$ is understood modulo $p_\nu$.  Define
\begin{align}
|\Phi_{\nu,m}\rangle
=
\frac{1}{\sqrt{p_\nu}}
\sum_{t=0}^{p_\nu-1}
\omega_{p_\nu}^{-mt}|T^t\mathbf a_\nu\rangle_e,
\quad
\omega_{p_\nu}=e^{2\pi i/p_\nu}.
\label{eq:app_fourier_state}
\end{align}
Then
\begin{align}
U_{\rm DR54}|\Phi_{\nu,m}\rangle
&=
\frac{1}{\sqrt{p_\nu}}
\sum_{t=0}^{p_\nu-1}
\omega_{p_\nu}^{-mt}|T^{t+1}\mathbf a_\nu\rangle_e
\nonumber\\
&=
\frac{1}{\sqrt{p_\nu}}
\sum_{t'=0}^{p_\nu-1}
\omega_{p_\nu}^{-m(t'-1)}|T^{t'}\mathbf a_\nu\rangle_e
\nonumber\\
&=
\omega_{p_\nu}^{m}|\Phi_{\nu,m}\rangle.
\label{eq:app_fourier_eigenproof}
\end{align}
Thus the eigenphase is $2\pi m/p_\nu$ modulo $2\pi$.
The orbit-Fourier states are orthonormal.  If $\nu\!\neq\!\nu'$, then the supports of the two states in the computational basis are disjoint.  If $\nu\!=\!\nu'$, then
\begin{align}
\langle\Phi_{\nu,m}|\Phi_{\nu,m'}\rangle
&=
\frac{1}{p_\nu}
\sum_{t,t'=0}^{p_\nu-1}
\omega_{p_\nu}^{mt-m't'}
{}_e\langle T^t\mathbf a_\nu|T^{t'}\mathbf a_\nu\rangle_e
\nonumber\\
&=
\frac{1}{p_\nu}
\sum_{t=0}^{p_\nu-1}
\omega_{p_\nu}^{(m-m')t}
=\delta_{m,m'}.
\label{eq:app_fourier_orthogonality}
\end{align}
Since the total number of Fourier states is $\sum_\nu p_\nu\!=\!|\mathcal I_L|\!=\!\dim\mathcal T_e$, they form an orthonormal basis of $\mathcal T_e$.

Degeneracies occur whenever distinct pairs $(\nu,m)$ and $(\nu',m')$ yield the same phase.  This is common because different orbits can have the same period, and orbits of different periods can also share phases when $m/p_\nu\!=\!m'/p_{\nu'}$ modulo one.  A dense eigensolver is free to output arbitrary orthonormal combinations within such degenerate eigenspaces.  Therefore individual numerical eigenvectors need not coincide with the analytic orbit-Fourier states.  The invariant diagnostic is the protected-sector weight
\begin{align}
w_\alpha
=
\langle\phi_\alpha|\Pi_e|\phi_\alpha\rangle,
\label{eq:app_protected_weight}
\end{align}
where $|\phi_\alpha\rangle$ is a numerical eigenvector of $U_{\rm DR54}$.  For exact protected eigenstates, or for any linear combination of degenerate protected eigenstates, $w_\alpha\!=\!1$.  For bulk eigenstates outside the protected sector, $w_\alpha\!=\!0$ in exact arithmetic, up to numerical roundoff and accidental degeneracy mixing.

\section{Entanglement bound for orbit-Fourier scars}
\label{app:entanglement_bound}

We prove the entanglement bound used in the main text.  Let $A\cup\bar A$ be any bipartition of the physical chain.  Since every $|T^t\mathbf a_\nu\rangle_e$ is a computational-basis product state, it factorizes as
\begin{align}
|T^t\mathbf a_\nu\rangle_e
=
|\ell_t\rangle_A\otimes|r_t\rangle_{\bar A}.
\label{eq:app_product_factorization}
\end{align}
Substituting this into the orbit-Fourier state gives
\begin{align}
|\Phi_{\nu,m}\rangle
=
\frac{1}{\sqrt{p_\nu}}
\sum_{t=0}^{p_\nu-1}
\omega_{p_\nu}^{-mt}|\ell_t\rangle_A\otimes|r_t\rangle_{\bar A}.
\label{eq:app_schmidt_rank_expression}
\end{align}
This is not necessarily the Schmidt decomposition, because the restricted states $|\ell_t\rangle_A$ and $|r_t\rangle_{\bar A}$ can repeat or fail to be mutually orthogonal.  Nevertheless, it is an expansion containing at most $p_\nu$ product terms.  Hence the Schmidt rank obeys
\begin{align}
\chi_A(|\Phi_{\nu,m}\rangle)\leq p_\nu.
\label{eq:app_schmidt_rank_bound}
\end{align}
Let $\{\lambda_j\}_{j=1}^{\chi_A}$ be the nonzero eigenvalues of the reduced density matrix $\rho_A$.  The von Neumann entropy satisfies
\begin{align}
S_A
&=-\sum_{j=1}^{\chi_A}\lambda_j\log\lambda_j
\leq \log\chi_A
\leq \log p_\nu
\leq \log L.
\label{eq:app_von_neumann_bound}
\end{align}
The first inequality is the standard fact that the entropy of a probability distribution supported on $\chi_A$ points is maximized by the uniform distribution.

For R\'enyi entropies $S_A^{(\alpha)}\!=\!(1-\alpha)^{-1}\log {\rm Tr}\,\rho_A^\alpha$, with $\alpha\!>\!0$, the same rank bound gives
\begin{align}
S_A^{(\alpha)}
\leq \log\chi_A
\leq \log p_\nu
\leq \log L.
\label{eq:app_renyi_bound}
\end{align}
Thus the bound holds for arbitrary bipartitions and for all positive R\'enyi indices.  Since $L=N/2$, this is the logarithmic sub-volume-law bound $S_A^{(\alpha)}\!\leq\!\log(N/2)$ quoted in the main text.

% In numerical diagonalization, degenerate scar eigenphases may lead to arbitrary linear combinations of orbit-Fourier states.  Such combinations can have larger entanglement than a particular orbit-Fourier representative if several degenerate protected states are mixed.  This does not contradict the analytic bound: the protected degenerate eigenspace always admits an orbit-Fourier basis satisfying the bound.  In figures, one should therefore overlay analytically constructed orbit-Fourier scars or identify the full protected subspace by the projector $\Pi_e$, rather than relying only on individual eigenvectors returned by a generic eigensolver.

\section{Numerical implementation details}
\label{app:numerics}

All numerics use computational-basis permutation matrices for the Rule-54 layers and embedded three-qubit projected random gates. Fig.~\ref{fig:2} uses dense exact diagonalization at $N\!=\!10$ for vacuum and one-soliton target-orbit constructions, with $\lambda\!=\!0.75$. Fig.~\ref{fig:3} uses the full hard-core projector construction at $N\!=\!8$, and $\lambda\!=\!0.85$, after removing the protected sector.

\end{document}